\documentclass[10pt,twocolumn,twoside]{IEEEtran}
\usepackage{amsfonts,dsfont,amssymb,amsbsy,amsmath,paralist,theorem,bm,ifthen,color}
\usepackage[pdfstartview=FitH,bookmarksnumbered,unicode,bookmarksopen=true]{hyperref}
\usepackage{graphicx}
\usepackage{epstopdf}
\usepackage{multirow}
\usepackage{subfigure,relsize}
\usepackage{mathrsfs}
\usepackage{bm}
\usepackage{stfloats}
\usepackage{url}
\usepackage{array}
\usepackage{colortbl,xcolor}
\usepackage{makecell}
\usepackage[noadjust]{cite}
\usepackage{cases,booktabs}
\usepackage{graphicx,algorithmic,algorithm,relsize}
\usepackage[justification=centering]{caption} 

\newtheorem{lem}{Lemma}
\newtheorem{prop}{Proposition}

\newtheorem{rem}{Remark}

\newtheorem{assumption}{Assumption}

\newcommand\diag{\ensuremath{{\rm diag}}}

\graphicspath{{fig/}}

\definecolor{green}{RGB}{34	195	46}
\definecolor{red}{RGB}{220 0 0}

\usepackage{graphicx} 
\allowdisplaybreaks[4]

\title{Environment-Aware Network-Level Design of Pinching-Antenna Systems--Part I: \\Traffic-Aware Case}

\author{
 Yanqing Xu, \IEEEmembership{Senior Member, IEEE,}
         Zhiguo Ding, \IEEEmembership{Fellow, IEEE,}
         Xiu Yin Zhang, \IEEEmembership{Fellow, IEEE,}\\
         Trung Q. Duong, \IEEEmembership{Fellow, IEEE,}
         and Tsung-Hui Chang, \IEEEmembership{Fellow, IEEE}
         \thanks{\smaller[1] Part of this work has been accepted by IEEE GLOBECOM 2026 \cite{xu2026traffic}.}
         \thanks{\smaller[1] Y. Xu is with the School of Science and Engineering, The Chinese University of Hong Kong, Shenzhen, 518172, China (email: xuyanqing@cuhk.edu.cn).}
         \thanks{\smaller[1] Z. Ding is with the School of Electrical \& Electronic Engineering, Nanyang Technological University, 639798, Singapore  (e-mail: zhiguo.ding@ntu.edu.sg).}
         \thanks{\smaller[1] X. Y. Zhang is with the School of Microelectronics, South China University of Technology, Guangzhou 510000, China, and also with the Pazhou Laboratory, Guangzhou 510330, China (e-mail: eexyz@scut.edu.cn).}
         \thanks{\smaller[1] T. Q. Duong is with the Faculty of Engineering and Applied Science, Memorial University, St. John's, NL A1C 5S7, Canada and also with the School of Electronics, Electrical Engineering and Computer Science, Queen's University Belfast, Belfast, U.K. (e-mail: tduong@mun.ca.)}
         \thanks{\smaller[1] T.-H. Chang is with the School of Artificial Intelligence, The Chinese University of Hong Kong, Shenzhen, 518172, China (email: changtsunghui@cuhk.edu.cn).} 
        }

\date{\today}

\begin{document}

\maketitle

\begin{abstract}
    Existing studies on pinching-antenna systems have primarily focused on link-level design, where system parameters are optimized for a given user set according to user-specific communication metrics. Such designs provide an effective means of exploiting the spatial reconfigurability of pinching antennas for short-term user-centric transmission. Complementing this perspective, this two-part paper develops an environment-aware network-level design framework, where pinching-antenna configurations are optimized according to long-term spatial information and evaluated through network-level performance metrics. Part I focuses on the traffic-aware case, where user presence is modeled statistically by a spatial traffic map and performance is optimized and evaluated in a traffic-aware sense; Part II addresses the geometry-aware case in obstacle-rich environments by explicitly modeling line-of-sight blocking and optimizing region-wide robustness objectives. In Part~I, we introduce traffic-weighted average signal-to-noise ratio (SNR) metrics and formulate two traffic-aware deployment problems: (i) maximizing the traffic-weighted network
    average SNR, and (ii) a fairness-oriented traffic-restricted max--min
    average-SNR design over traffic-dominant grids. To solve these nonconvex
    problems with low complexity, we reveal and exploit their separable structures.
    For the network-average objective, we establish unimodality properties of the hotspot-induced components and develop a candidate-based low-complexity
    method that only needs to evaluate the objective at a small set of candidate
    antenna positions. For the traffic-restricted max--min objective, we develop a
    block coordinate descent framework where each coordinate update reduces to a 
    one-dimensional subproblem via an epigraph reformulation and bisection.
    Simulations show that traffic-aware pinching-antenna positioning consistently
    outperforms representative fixed and heuristic traffic-aware deployments in the
    considered setups.
\end{abstract}

\begin{IEEEkeywords}
     Pinching antenna, environment-aware design, traffic-weighted optimization, traffic-restricted optimization.
\end{IEEEkeywords}

\section{Introduction} 

In addition to link-level transmission performance, modern wireless systems are increasingly assessed by area-wide service quality. In practice, operators must provide reliable and fair coverage over an entire service region, especially in deployments where both user demand and propagation conditions vary remarkably across space. User presence is often non-uniform and clustered around traffic hotspots \cite{niu2011tango}, while blockage, scattering, and irregular layouts create location-dependent channels with LoS corridors and shadowed zones \cite{zeng2021toward,bi2019engineering,xu2025distributed}. Consequently, network performance is jointly governed by the spatial traffic distribution and geometry-induced propagation characteristics. This motivates a complementary network-level design perspective, where performance is characterized over the service region using long-term spatial information.

Conventional network-level optimization typically tunes configuration parameters of a fixed infrastructure, e.g., antenna tilts/beam patterns, transmit powers, and cell-level control knobs, to improve region-wide coverage and capacity. A representative example is the self-organizing-network line of work on coverage-and-capacity optimization, where remote electrical tilts and power settings are adapted based on network measurements (e.g., call traces) to address overshooting and coverage holes while balancing spectral efficiency; related formulations also optimize antenna tilts via utility/fairness criteria at the network level \cite{partov2014utility,buenestado2016self}.
In parallel, map-based paradigms such as radio environment maps (REMs) and channel-knowledge maps (CKMs) advocate learning location-tagged propagation statistics to enable environment-aware resource management across grids \cite{zeng2024tutorial}. However, CKM-centric designs primarily capture propagation knowledge and do not explicitly encode where/when users appear and demand concentrates; as emphasized by the perception-embedding-map framework (PEMNet), network optimization fundamentally benefits from jointly embedding channel and fine-grained spatial–temporal traffic knowledge, since traffic maps determine the relevant operating points and bottlenecks that channel maps alone cannot reveal \cite{li2026pemnet}. Recent data-driven ``digital-twin'' frameworks further aim to support network-level optimization by building measurement-grounded simulators and localized channel/coverage maps so that network behaviors can be evaluated and optimized off-line rather than through costly trial-and-error in live networks \cite{luo2023srcon,peng2025rf}.

Despite their success, these approaches are fundamentally built on a fixed infrastructure and therefore share an inherent limitation: the radiating sites remain physically anchored, so optimization can primarily reshape \emph{how} a given site radiates (e.g., via tilt, power, or beam pattern adaptation) but cannot change \emph{where} energy is injected into the environment. Consequently, both environment-induced and demand-induced non-uniformities can persist at the network level. In blockage-rich environments, coverage tails and worst-location performance are often dictated by a small set of severely shadowed or far-corner grids that are difficult to lift through parameter tuning alone. Meanwhile, when user traffic is highly heterogeneous, fixed-site configurations may struggle to deliver sufficient service to shifting hotspots without sacrificing performance elsewhere. 

These limitations motivate architectures that introduce an additional,
structured spatial degree of freedom at the infrastructure level.
Several flexible-antenna technologies have recently been investigated
toward this goal, including movable antennas, six-dimensional movable
antennas, and fluid antenna systems, which enable the antenna position,
orientation, or effective radiation port to be reconfigured
\cite{wong2020fluid,
zhu2025tutorial,shao20246d,shao20256d}. Such approaches provide valuable spatial
flexibility, but the achievable reconfiguration is typically confined to
the physical region occupied by the antenna structure.
Pinching-antenna systems offer a complementary form of spatial
reconfigurability by enabling on-demand signal radiation at selectable
locations along an extended guiding medium, e.g., dielectric waveguide
\cite{suzuki2022pinching}, allowing the effective radiation origin to be
reconfigured over a potentially much larger spatial range. 
Specifically, the radio frequency (RF) signal is guided along a medium and selectively radiated at configurable points, creating short-hop links near users and enabling the coverage ``injection points'' to be reconfigured after deployment. This expands the design space beyond conventional controls that only reshape how a fixed site radiates (e.g., tilt, power, beamforming), by additionally controlling where energy is injected via repositioning/activating radiating points along the medium. Such spatial actuation is particularly valuable for area-wide objectives, as it can steer coverage toward high-demand regions and mitigate geometry-induced dead zones (e.g., obstacle shadows) by choosing more favorable radiating locations.
In this two-part paper, we develop an environment-aware, network-level design framework for pinching-antenna systems that accounts for both demand distribution and propagation geometry, together with low-complexity optimization algorithms to configure radiating points for improved region-wide performance.

In the literature, pinching-antenna systems have received rapidly growing attention due to their unique capability of enabling large-scale spatially reconfigurable transmission by creating configurable radiating points along an extended guiding medium. For example, the work in \cite{ding2024flexible} established fundamental electromagnetic/system models and showcased that pinching antennas can flexibly form strong LoS-dominant links with reduced propagation loss by adjusting the radiation locations. Building on this foundation, the downlink rate maximization problem in a single-waveguide setting was investigated in \cite{xu2025rate}, where a two-stage design was proposed to jointly reduce large-scale attenuation and promote favorable signal combining at the receiver. The study was subsequently extended to multiuser and multi-waveguide regimes in \cite{bereyhi2025mimo,zhang2025two,papanikolaou2025physical,zhou2025gradient}, demonstrating that jointly optimizing the pinching-antenna locations and transmit beamforming can substantially improve spectral and/or energy efficiencies. To further enhance spectral efficiency through user multiplexing, NOMA-enabled pinching-antenna transmission was explored in \cite{wang2025antenna,vashakidze2025joint,xue2025resource}, where multiple users' signals are superposed and conveyed through the same guiding medium. In parallel, the integration of pinching antennas with emerging paradigms such as integrated sensing and communication (ISAC) has been studied in \cite{ding2025pinching,khalili2025pinching,mao2025multi}. By leveraging the capability of pinching antennas to reconfigure LoS links, an environment division multiple access (EDMA) technique was proposed in \cite{ding2025EDMA}, further highlighting the potential of pinching antennas for multiuser communications. 
 Recent studies have also investigated performance analysis and practical
two-state pinching-antenna architectures, e.g.,
\cite{tyrovolas2025performance,tyrovolas2026many,galanopoulou2026viterbi}. These works mainly
characterize physical-layer performance or discrete implementations.
Beyond dielectric-waveguide realizations, which are particularly attractive
at high carrier frequencies, the generalized pinching-antenna concept can
also be implemented over other guiding structures \cite{xu2025generalized}.
For instance, \cite{wang2025generalized} developed an LCX-based architecture
in which radiation locations are reconfigured by selectively activating or
deactivating leakage slots, thereby transforming conventional fixed-leakage
LCX into an on-demand radiating structure while retaining the robustness
and cost advantages of coaxial cables. More recently,
\cite{xu2026generalized-rs} investigated a radio-stripe-based realization,
where distributed active antenna processing units can be selectively
activated to form reconfigurable radiation/reception points, thereby
introducing additional flexibility in configuring the effective wireless
access topology.

Existing design-oriented studies have primarily investigated pinching-antenna systems from a link-level perspective, where the antenna locations and transmission parameters are optimized for a given set of active users according to user-level metrics such as SINR and achievable rate. These studies have established important design principles for exploiting the spatial reconfigurability of pinching antennas and form the foundation for their user-centric operation. Building on these advances, a complementary network-level question arises: how should the reconfigurable radiation locations be configured according to long-term spatial information of the service region, before the instantaneous user realization is known? Addressing this question requires performance metrics and optimization methodologies defined over spatial user distributions or environment maps rather than a particular user realization. This motivates the environment-aware network-level framework developed in this two-part paper.

Building on these observations, this two-part paper develops an environment-aware network-level design framework for pinching-antenna systems under two complementary settings: the \emph{traffic-aware} case and the \emph{geometry-aware} case. 
Different from conventional environment-aware network optimization over
fixed infrastructure, our framework treats the reconfigurable radiation
locations as network-level design variables and configures them according
to long-term spatial information of the service region.
In Part~I, we consider traffic-aware network-level optimization,
where user presence is characterized statistically by a spatial
traffic distribution and the radiation locations are optimized accordingly
to improve traffic-aware performance over the service region.
Part~II complements this study by focusing on geometry-aware design in obstacle-rich environments, where the environment-induced LoS blocking structure is explicitly modeled and area-wide objectives such as coverage and worst-location robustness are optimized.
In Part~I, the main contributions are summarized as follows:
\begin{itemize}
    \item {\bf Traffic-aware network-level modeling and metrics:}
    We develop a tractable traffic-aware network-level framework for
    pinching-antenna systems, where the reconfigurable radiation locations are
    configured according to long-term spatial traffic demand. Specifically,
    we (i) model user presence over the service region via a Gaussian-mixture
    hotspot traffic map, and (ii) define a traffic-weighted network average-SNR
    metric that evaluates long-term coverage quality under random LoS/non LoS (NLoS) fading. To enable efficient optimization, we further introduce a grid-based approximation that discretizes the continuous traffic integral into a numerically tractable weighted sum over representative spatial grids.
    \item \textbf{Traffic-aware problem formulations and optimizations:}
    Building on the proposed traffic-aware grid model, we formulate two
    pinching-antenna deployment problems: (i) maximizing the traffic-weighted
    network average SNR, and (ii) a fairness-oriented traffic-restricted max--min
    average-SNR design that maximizes the worst average SNR over an active-grid set.
    Beyond the problem formulations, we reveal and exploit their
    problem-specific structures to enable low-complexity optimization.
    Specifically, for the network-average objective, we express each per-waveguide
    term as a traffic-weighted superposition of hotspot-induced unimodal components,
    reveal the ensuing merge--split behavior,
    and develop a candidate-based low-complexity algorithm that only needs to
    compare the objective values over a small set of candidate antenna positions.
    For the traffic-restricted max--min objective, we adopt a block coordinate
    descent (BCD) framework and show that each coordinate update reduces to a
    globally solvable one-dimensional subproblem, which can be efficiently handled
    via bisection on its epigraph form.
\end{itemize}
Extensive simulations demonstrate that the proposed low-complexity
traffic-aware algorithms achieve compelling performance in terms of the considered metrics, and that pinching-antenna systems significantly outperform the
fixed-antenna benchmark across a wide range of system settings.

The rest of the paper is organized as follows. Section \ref{sec:system_traffic_channel} presents the system model, the probabilistic LoS/NLoS channel model, and the proposed traffic-aware network-level performance metrics together with their grid-based approximation. Section \ref{sec:avg_snr_opt} formulates the traffic-weighted network average-SNR maximization problem and develops an efficient candidate-based low-complexity algorithm. Section \ref{sec:traffic_restricted_maxmin} considers a fairness-oriented traffic-restricted max–min average-SNR design and proposes a low-complexity BCD algorithm with bisection-based coordinate updates. Section \ref{sec:simulation}  provides numerical results and performance discussions. Finally, Section \ref{sec: conclusion} concludes the paper.


\begin{figure}[!t]
	\centering
	\includegraphics[width=0.92\linewidth]{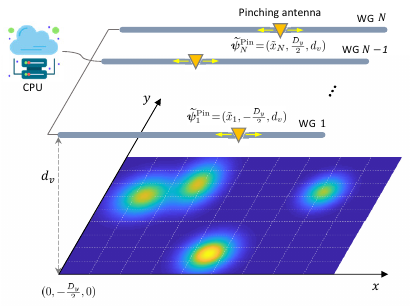}\\
        \captionsetup{justification=justified, singlelinecheck=false, font=small}	
        \caption{Traffic-aware pinching-antenna systems.} \label{fig: system model} 
\end{figure}

\section{Models and Performance Metrics}
\label{sec:system_traffic_channel}
In this section, we present the system, channel, and traffic models, as well as the performance metrics and design objectives considered in this work.

\subsection{System Model}
\label{subsec:system_model}
Let us first introduce the system model. In this work, we consider a downlink pinching-antenna system where $N$ dielectric
waveguides are deployed above a rectangular communication region of size
$D_x \times D_y$, as shown in Fig. \ref{fig: system model}.
The waveguides are aligned along the $x$-axis at a height $d_v$ and
arranged uniformly in the vertical ($y$-) direction across the coverage
region.
Let $\mathcal{N} \triangleq \{1,2,\ldots,N\}$ denote the index set of
waveguides. The spacing between adjacent waveguides is denoted by $d_h$ and is
chosen as $d_h = \frac{D_y}{N-1} \gg \lambda$, where $\lambda$ is the carrier wavelength.
This ensures that mutual coupling between different waveguides can be
neglected. The coordinates of the feed point on the $n$-th waveguide are $\boldsymbol{\psi}_{0,n} = \big[0,\ (n-1)d_h - \tfrac{D_y}{2},\ d_v\big]^{\mathsf T}, \forall n \in \mathcal{N}$.
Each waveguide is equipped with a single pinching antenna.
The position of the pinching antenna on the $n$-th waveguide is denoted
by $\boldsymbol{\psi}^{\mathrm{pin}}_n = \big[\widetilde{x}_n, \widetilde y_n, d_v\big]^{\mathsf T}, \widetilde{x}_n \in [0,D_x]$ is tunable and $\widetilde y_n = (n-1)d_h - \tfrac{D_y}{2}$ is fixed. The pinching antennas' horizontal positions are collected into a vector $\widetilde{\boldsymbol{x}} = [\widetilde{x}_1,\ldots,\widetilde{x}_N]^{\mathsf T}$.
All waveguides are connected to a common baseband processing unit,
which enables coherent joint transmission.

In this work, we focus on the downlink SNR experienced by a typical scheduled user.
In each time–frequency resource, at most one user is served, and the BS transmits one data symbol $s$ with
$\mathbb{E}[|s|^2]=1$ across all $N$ pinching antennas with total power $P$.
For a user located at position $\boldsymbol{\psi} = [x,y,0]^{\mathsf T}, 0 \le x \le D_x, -\tfrac{D_y}{2} \le y \le \tfrac{D_y}{2}$, 
let $\boldsymbol{h}(\boldsymbol{\psi},\widetilde{\boldsymbol{x}}) = \big[  h_1(\boldsymbol{\psi},\widetilde{\boldsymbol{x}}),\ldots,  h_N(\boldsymbol{\psi},\widetilde{\boldsymbol{x}}) \big]^{\mathsf T}$ denote the vector of complex baseband channel coefficients from the $N$ pinching antennas to this user.

The received signal is given by
\begin{align}
    y
    = \sqrt{P}\,
      \boldsymbol{h}^{\mathsf T}(\boldsymbol{\psi},\widetilde{\boldsymbol{x}})
      \boldsymbol{w}\, s + n,
\end{align}
where $\boldsymbol{w}\in\mathbb{C}^{N}$ is the beamforming vector, and
$n\sim\mathcal{CN}(0,\sigma^2)$ denotes additive white Gaussian noise (AWGN),
with $\mathcal{CN}(0,\sigma^2)$ denoting the circularly symmetric complex
Gaussian distribution with zero mean and variance $\sigma^2$. 
Under single-user operation, the BS applies maximum-ratio transmission
(MRT) and the MRT beamformer is given by 
\begin{align}
    \boldsymbol{w}
    = \frac{
        \boldsymbol{h}^\ast(\boldsymbol{\psi},\widetilde{\boldsymbol{x}})
      }{
        \big\|\boldsymbol{h}(\boldsymbol{\psi},\widetilde{\boldsymbol{x}})\big\|_2
      }, \ \|\boldsymbol{w}\|^2 = 1,
\end{align}
which is optimal in terms of SNR for a single scheduled user under a
sum-power constraint. Therefore, the corresponding instantaneous SNR at position $\boldsymbol{\psi}$ is given by
\begin{align}
    \Gamma(\boldsymbol{\psi},\widetilde{\boldsymbol{x}})
    = \frac{P}{\sigma^2}
      \big\|\boldsymbol{h}(\boldsymbol{\psi},\widetilde{\boldsymbol{x}})\big\|_2^2
    \triangleq \rho\,
      \big\|\boldsymbol{h}(\boldsymbol{\psi},\widetilde{\boldsymbol{x}})\big\|_2^2,
\end{align}
where $\rho \triangleq P/\sigma^2$ denotes the transmit SNR.

\subsection{Channel Model with Random LoS and NLoS Components}
\label{subsec:channel_model}

The wireless channel between the $n$-th pinching antenna and a user at position $\boldsymbol{\psi} = [x,y,0]^{\mathsf T}$ is modeled as the superposition of a probabilistic LoS component and a  NLoS scattering component \cite{xu2025pinching-nlos}, which is given by
\begin{align}
    h_n(\boldsymbol{\psi},\widetilde{\boldsymbol{x}})
    = \xi_n\, h^{\mathrm{LoS}}_n(\boldsymbol{\psi},\widetilde{\boldsymbol{x}})
      + h^{\mathrm{NLoS}}_n(\boldsymbol{\psi},\widetilde{\boldsymbol{x}}).
\end{align}

For a user at $\boldsymbol{\psi} = [x,y,0]^{\mathsf T}$, the LoS channel component $h^{\mathrm{LoS}}_n(\boldsymbol{\psi},\widetilde{\boldsymbol{x}})$ is given by
\footnote{For analytical tractability, the main channel model neglects in-waveguide attenuation, allowing us to focus on the network-level traffic-aware design under a clean analytical framework. The impact of practical in-waveguide attenuation is quantitatively evaluated in Sec.~\ref{sec:simulation} through numerical simulations.
}
\begin{align}
    h^{\mathrm{LoS}}_n(\boldsymbol{\psi},\widetilde{\boldsymbol{x}})
    = \frac{\sqrt{\eta}\,
      e^{-j\left(\frac{2\pi}{\lambda} \sqrt{(x - \widetilde x_n)^2 + C_n}
                   + \frac{2\pi}{\lambda_g}\widetilde{x}_n \right)}}
           {\sqrt{(x - \widetilde x_n)^2 + C_n}},
\end{align}
where $C_n = \big(y - ((n-1)d_h - \tfrac{D_y}{2})\big)^2 + d_v^2$, $\lambda$ is the free-space wavelength, $\lambda_g$ is the guided wavelength in the waveguide, and $\eta = \frac{c^2}{(4\pi f_c)^2}$ is a frequency-dependent constant, with $c$ denoting the speed of light and $f_c$ denoting the carrier frequency. We also define the distance between the $n$-th pinching antenna and the user as $r_n(\boldsymbol{\psi},\widetilde{\boldsymbol{x}}) \triangleq \sqrt{(x - \widetilde{x}_n)^2 + C_n}$.

The Bernoulli random variable $\xi_n \in \{0,1\}$ indicates the presence ($\xi_n = 1$) or absence ($\xi_n = 0$) of a direct LoS path between the user and the antenna. The probability of a LoS path is a function of the distance $r_n(\boldsymbol{\psi},\widetilde{\boldsymbol{x}})$, which is modeled as
\begin{align}
    \Pr[\xi_n = 1] = p_{\mathrm{LoS}}\left(r_n(\boldsymbol{\psi},\widetilde{\boldsymbol{x}})\right) = e^{-\beta\, r_n^2(\boldsymbol{\psi},\widetilde{\boldsymbol{x}})},
\end{align}
where $\beta \geq 0$ is an environment-dependent parameter that controls the likelihood of a LoS path \cite{3gpp2020channel,bai2014analysis,ding2025blockage}.

The NLoS channel component, $h^{\mathrm{NLoS}}_n(\boldsymbol{\psi},\widetilde{\boldsymbol{x}})$, is modeled by the channel gain as a sum of multipath components from scattered paths. We model the NLoS channel as a sum of scattered multipath components, given by
\begin{align}
    h^{\mathrm{NLoS}}_n(\boldsymbol{\psi},\widetilde{\boldsymbol{x}})
    = \sum_{c=1}^{N_c} g_{n,c}(\boldsymbol{\psi},\widetilde{\boldsymbol{x}}),
\end{align}
where $N_c$ is the number of NLoS clusters, and $g_{n,c}(\boldsymbol{\psi},\widetilde{\boldsymbol{x}})$ represents the small-scale fading of the $c$-th NLoS cluster for the $n$-th antenna, which is assumed to be a zero-mean complex Gaussian random variable with a user-to-antenna distance-dependent variance 
\begin{align}
    g_{n,c}(\boldsymbol{\psi},\widetilde{\boldsymbol{x}}) \sim \mathcal{CN}\left( 0, \frac{\mu_{n,c}^2}{r_n^2(\boldsymbol{\psi},\widetilde{\boldsymbol{x}})} \right),
\end{align}
where $\mu_{n,c}^2$ represents the average power of the $c$-th NLoS cluster for the $n$-th antenna. Assume that the NLoS paths are independent, the NLoS path gain follows $h^{\mathrm{NLoS}}_n(\boldsymbol{\psi},\widetilde{\boldsymbol{x}}) \sim \mathcal{CN}(0,\frac{\mu_{n}^2}{r_n^2(\boldsymbol{\psi},\widetilde{\boldsymbol{x}})})$, where $\mu_{n}^2 = \sum_{c=1}^{N_c}\mu_{n,c}^2$.

We note that the above probabilistic LoS/NLoS model is adopted to capture random blockage effects while maintaining a tractable characterization of the average channel power. It is particularly suitable when detailed environment geometry is unavailable and the LoS blockage can only be characterized statistically. When such geometry information is available, the LoS state can instead be determined explicitly from the environment map, as considered in Part II. More general propagation models, such as Rician fading with distance- or environment-dependent parameters, can also be considered when finer-grained channel information is available.

In this work, we do not focus on the instantaneous SNR of a
particular user, but network-level performance metrics, i.e., the average SNR according to where users are likely to appear.
To this end, we next introduce a traffic-aware spatial user model based
on Gaussian hotspots, which characterizes the distribution of active
user positions and serves as the basis for the subsequent grid-based
approximation and optimization of $\widetilde{\boldsymbol{x}}$.

\subsection{Traffic Model}
\label{subsec:traffic_model}

Let $\boldsymbol{U}=[X,Y]^{\mathsf T}$ denote the random horizontal position
of a typical scheduled user on the plane $z=0$.
We model the probability of user appearance over a region as a mixture
of two-dimensional Gaussian ``hotspots''\footnote{ The Gaussian-mixture
model is adopted as a tractable and
interpretable representation of spatially clustered traffic. In practice,
traffic distributions may be non-Gaussian or time-varying and can be
obtained from historical measurements or data-driven prediction methods.
Accordingly, the assumed traffic density can be replaced by an estimated
spatial traffic distribution when such information is available. }.
According to \cite{zhang2021predictive}, the probability density function (PDF) of $\boldsymbol{U}$ is given by
\begin{align}
  f_{\boldsymbol{U}}(x,y)
  = \sum_{\ell=1}^{L} \alpha_{\ell}\,
    \mathcal{N}\!\big([x,y]^{\mathsf T};
        \boldsymbol{\mu}_{\ell},\boldsymbol{\Sigma}_{\ell}\big),
  \label{eq:pdf_hotspots}
\end{align}
where $L$ is the number of hotspots, $\alpha_{\ell}\ge 0$ denotes the traffic
intensity (mixture weight) of hotspot $\ell$ satisfying
$\sum_{\ell=1}^{L}\alpha_{\ell}=1$, and
$\mathcal{N}(\cdot;\boldsymbol{\mu}_{\ell},\boldsymbol{\Sigma}_{\ell})$ denotes
the Gaussian probability density function (PDF) with mean $\boldsymbol{\mu}_{\ell}$ and covariance
$\boldsymbol{\Sigma}_{\ell}$.
Here,
$\boldsymbol{\mu}_{\ell}=[\mu_{\ell,x},\mu_{\ell,y}]^{\mathsf T}$ specifies the
hotspot center on the horizontal plane, i.e., $\mu_{\ell,x}$ and $\mu_{\ell,y}$
are the $x$- and $y$-coordinates of hotspot $\ell$, respectively.
Moreover, we adopt a diagonal covariance model
$\boldsymbol{\Sigma}_{\ell}=\diag(\sigma_{\ell,x}^2,\sigma_{\ell,y}^2)$, where
$\sigma_{\ell,x}^2$ and $\sigma_{\ell,y}^2$ characterize the traffic spread
(variance) of hotspot $\ell$ along the $x$- and $y$-directions, respectively.

At each scheduling instant, the position of the scheduled user can be
regarded as an independent realization of $\boldsymbol{U}$.
This traffic-aware model captures the fact that users are much more
likely to appear around a few dominant hotspots (such as offices,
meeting rooms, or production lines), while user density decays away from these centers according to the
Gaussian tails.

\subsection{Traffic-Aware Network Performance Metric}
\label{subsec:perf_metrics}

In fading environments with probabilistic LoS/NLoS conditions, it is
important to evaluate how well a given pinching-antenna deployment
matches the spatially non-uniform traffic demand. In this work, we adopt
a traffic-aware metric based on the average SNR, which captures the long-term link quality experienced by a typical active user whose location follows the hotspot traffic map.

For a fixed deployment $\widetilde{\boldsymbol{x}}$, the local average
SNR at position $(x,y)$ is defined as
\begin{align}
  \bar{\Gamma}(x,y;\widetilde{\boldsymbol{x}})
  \triangleq
    \mathbb{E}\big[
      \Gamma(\boldsymbol{\psi},\widetilde{\boldsymbol{x}})
      \,\big|\,
      \boldsymbol{\psi} = [x,y,0]^{\mathsf T}
    \big],
\end{align}
where the expectation is taken over random LoS/NLoS channels for a user
fixed at $(x,y)$.


Since the scheduled-user location $\mathbf{U}$ is random with spatial
distribution $f_U(x,y)$, a natural network-level metric is the expected
average SNR experienced by a typical scheduled user. Accordingly, we
define the network average SNR as
\begin{align}
  \bar{\Gamma}_{\mathrm{net}}(\widetilde{\boldsymbol{x}})
  \triangleq
    \iint \bar{\Gamma}(x,y;\widetilde{\boldsymbol{x}})
         f_{\boldsymbol U}(x,y)\,
         \mathrm{d}x\,\mathrm{d}y.
  \label{eq:net_avg_SNR_cont}
\end{align}
Since $f_U(x,y)$ is the probability density of the scheduled-user
location, $\bar{\Gamma}_{\rm net}(\tilde{\mathbf{x}})$ represents the
long-term average SNR experienced by a typical scheduled user over
different spatial locations and channel realizations. Locations with 
higher traffic probability contribute more to this average because 
a scheduled user is more likely to appear there.

As seen, directly optimizing \eqref{eq:net_avg_SNR_cont} is
non-trivial, since it involves a two-dimensional integral over
$f_{\boldsymbol U}(x,y)$ and an expectation over the hybrid LoS/NLoS
channel, neither of which admits a simple closed-form expression in
general. This motivates the grid-based approximation introduced in the
next subsection, where the service area and traffic distribution are
discretized into a finite set of representative grids to enable
tractable optimization of $\widetilde{\boldsymbol{x}}$.

\subsection{Grid-Based Approximation}
\label{subsec:grid_metrics}

To obtain a tractable network-level formulation, we discretize the
rectangular communication region of size $D_x \times D_y$ into
$N_h$ grids along the horizontal ($x$-) direction and $N_v$ grids along
the vertical ($y$-) direction.
Let
\begin{align}
    \Delta_u \triangleq \frac{D_x}{N_h}, \qquad
    \Delta_v \triangleq \frac{D_y}{N_v}
\end{align}
denote the grid sizes in the horizontal and vertical directions,
respectively.
The center of grid $(u,v)$ is then given by
\begin{align}
    x_u = \Big(u-\tfrac{1}{2}\Big)\Delta_u, \qquad
    y_v = -\tfrac{D_y}{2} + \Big(v-\tfrac{1}{2}\Big)\Delta_v,
\end{align}
for $u = 1,\ldots,N_h$ and $v = 1,\ldots,N_v$.
The corresponding representative user position in grid $(u,v)$ on the
plane $z=0$ is $\boldsymbol{\psi}_{u,v} = [x_u,y_v,0]^{\mathsf T}.$

Using this grid-based approximation, the continuous random position
$\boldsymbol{U}$ is approximated by a discrete random variable that
takes values in the finite set of grid centers
$\{\boldsymbol{\psi}_{u,v}\}$ with probabilities $\{p_{u,v}\}$.
In particular, based on the Gaussian-mixture traffic model in
\eqref{eq:pdf_hotspots}, the probability of grid $(u,v)$ can be
approximated by
\begin{align}
  p_{u,v} \approx f_{\boldsymbol U}(x_u,y_v)\,\Delta_u\Delta_v,
\end{align}
followed by a normalization such that $\sum_{u=1}^{N_h}\sum_{v=1}^{N_v} p_{u,v} = 1$.
In this way, expectations with respect to $\boldsymbol{U}$ can be
approximated by finite weighted sums over the grids.

For a given pinching-antenna deployment $\widetilde{\boldsymbol{x}}$,
the instantaneous SNR experienced by a user located in grid $(u,v)$ is
approximated by
\begin{align}
    \Gamma_{u,v}(\widetilde{\boldsymbol{x}})
    \triangleq
      \Gamma(\boldsymbol{\psi}_{u,v},\widetilde{\boldsymbol{x}}),
\end{align}
with corresponding average SNR defined as
\begin{align}
    \bar{\Gamma}_{u,v}(\widetilde{\boldsymbol{x}}) \triangleq
      \mathbb{E}\big[
        \Gamma_{u,v}(\widetilde{\boldsymbol{x}})
      \big],
\end{align}
where the expectation and probability are taken over the LoS/NLoS
channels, conditioned on the user being in
grid $(u,v)$.

Under this grid-based approximation, the traffic-aware
average-SNR-based metric in \eqref{eq:net_avg_SNR_cont} is evaluated
by
\begin{align}
  \bar{\Gamma}_{\mathrm{net}}(\widetilde{\boldsymbol{x}})
  \approx
    \sum_{u=1}^{N_h}\sum_{v=1}^{N_v}
      p_{u,v}\,
      \bar{\Gamma}_{u,v}(\widetilde{\boldsymbol{x}}),
      \label{eq:net_avg_SNR_grid} 
\end{align}
which provides numerically tractable discrete counterparts of the
continuous-domain metrics \eqref{eq:net_avg_SNR_cont}.
This grid-based expression forms the basis for the optimization
problems studied in the subsequent sections.

\section{Traffic-Weighted Average-SNR Optimization}
\label{sec:avg_snr_opt}

In this section, we aim to maximize the traffic-weighted network average SNR by optimizing the pinching-antenna positions for the given user traffic distribution. In what follows, based on the grid-based approximation, we first derive closed-form expressions for the per-grid and network average SNR. We then formulate the corresponding optimization problem and develop an efficient algorithm to solve it.

\subsection{Per-Grid and Network Average SNR Derivations}
\label{subsec:avg_snr_grid}

Recall that, under the grid-based model, the communication region is
discretized into grids indexed by $(u,v)$ with centers
$\boldsymbol{\psi}_{u,v} = [x_u,y_v,0]^{\mathsf T}$, and the
traffic weight of grid $(u,v)$ is $p_{u,v}$.
For a given deployment $\widetilde{\boldsymbol{x}}$, the instantaneous
SNR experienced by a user located in grid $(u,v)$ is denoted by
\begin{align}
  \Gamma_{u,v}(\widetilde{\boldsymbol{x}}) = \rho\, \big\| \boldsymbol{h}(\boldsymbol{\psi}_{u,v},\widetilde{\boldsymbol{x}}) \big\|_2^2 = \rho \sum_{n=1}^{N} \big| h_n(\boldsymbol{\psi}_{u,v},\widetilde{\boldsymbol{x}}) \big|^2,
\end{align}
where $\boldsymbol{h}(\boldsymbol{\psi}_{u,v},\widetilde{\boldsymbol{x}})$
collects the channels from all $N$ pinching antennas to grid $(u,v)$, and the channels of antennas are independent.

\begin{lem}
\label{lem:average_snr}
For a given pinching-antenna deployment $\widetilde{\boldsymbol{x}}$, the
local average SNR in grid $(u,v)$ admits the closed-form expression
\begin{align}
  \label{eq:avg_snr_uv_closed_form}
    \bar{\Gamma}_{u,v}(\widetilde{\boldsymbol{x}})
    = \rho \sum_{n=1}^{N}
       \frac{ \eta\, \exp\!\big(-\beta r_{n,u,v}^2(\widetilde{x}_n)\big) + \mu_n^2 }
       { r_{n,u,v}^2(\widetilde{x}_n) },
\end{align}
where $r_{n,u,v}(\widetilde{x}_n) \triangleq \sqrt{(x_u-\widetilde{x}_n)^2+C_{n,v}} $
denotes the distance between pinching antenna $n$ and grid $(u,v)$, with 
$C_{n,v} \triangleq (y_v-((n-1)d_h - \tfrac{D_y}{2}))^2+d_v^2$.
With the traffic weights $\{p_{u,v}\}$, the corresponding
traffic-aware network average SNR is given by
\begin{align}
    \bar{\Gamma}_{\mathrm{net}}(\widetilde{\boldsymbol{x}})
    \approx
      \rho \sum_{u=1}^{N_h}\sum_{v=1}^{N_v}
        p_{u,v}\, \sum_{n=1}^{N}
       \frac{ \eta\, \exp\!\big(-\beta r_{n,u,v}^2(\widetilde{x}_n)\big) + \mu_n^2 }
       { r_{n,u,v}^2(\widetilde{x}_n) }.
    \label{eq:net_avg_SNR_grid_final}
\end{align}
\end{lem}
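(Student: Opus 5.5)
By linearity of expectation, $\mathbb{E}[\Gamma_{u,v}(\widetilde{\boldsymbol{x}})]=\rho\sum_{n=1}^{N}\mathbb{E}\big[|h_n(\boldsymbol{\psi}_{u,v},\widetilde{\boldsymbol{x}})|^2\big]$, because MRT turns the instantaneous SNR into $\rho\|\boldsymbol{h}\|_2^2$. The computation therefore splits into one second-moment calculation per antenna; independence across antennas is not even needed at this stage. For a fixed $n$, abbreviate $r=r_{n,u,v}(\widetilde{x}_n)$. Note that $r$ is deterministic once the grid center and $\widetilde{x}_n$ are fixed, so $p_{\mathrm{LoS}}(r)=e^{-\beta r^2}$ and the NLoS variance $\mu_n^2/r^2$ are constants inside the expectation.

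Expanding the squared modulus of $h_n=\xi_n h_n^{\mathrm{LoS}}+h_n^{\mathrm{NLoS}}$ gives
\begin{align*}
|h_n|^2=\xi_n^2|h_n^{\mathrm{LoS}}|^2+|h_n^{\mathrm{NLoS}}|^2+2\,\mathrm{Re}\big\{\xi_n h_n^{\mathrm{LoS}}(h_n^{\mathrm{NLoS}})^\ast\big\}.
\end{align*}
We evaluate the three terms in turn.

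\emph{LoS term.} Since $\xi_n\in\{0,1\}$, we have $\xi_n^2=\xi_n$, so $\mathbb{E}[\xi_n^2]=e^{-\beta r^2}$. Also $|h_n^{\mathrm{LoS}}|^2=\eta/r^2$ exactly, because the phase factor has unit modulus. The contribution is therefore $\eta e^{-\beta r^2}/r^2$.

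\emph{NLoS term.} From the stated aggregated distribution $h_n^{\mathrm{NLoS}}\sim\mathcal{CN}(0,\mu_n^2/r^2)$, we get $\mathbb{E}|h_n^{\mathrm{NLoS}}|^2=\mu_n^2/r^2$. This distribution follows from summing the independent cluster gains $g_{n,c}$, whose variances add to $\mu_n^2/r^2$.

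\emph{Cross term.} This is the only step needing care. I will assume, as is implicit in the model, that the blockage indicator $\xi_n$ is independent of the NLoS scattering. Then
\begin{align*}
\mathbb{E}\big[\xi_n h_n^{\mathrm{LoS}}(h_n^{\mathrm{NLoS}})^\ast\big]=e^{-\beta r^2}\,h_n^{\mathrm{LoS}}\,\mathbb{E}\big[(h_n^{\mathrm{NLoS}})^\ast\big]=0,
\end{align*}
because $h_n^{\mathrm{NLoS}}$ is zero-mean. Even without full independence, it suffices that $h_n^{\mathrm{NLoS}}$ is zero-mean conditionally on $\xi_n$; I would state this explicitly as the modeling assumption being used.

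Summing the three terms gives $\mathbb{E}|h_n|^2=\big(\eta e^{-\beta r^2}+\mu_n^2\big)/r^2$. Summing over $n$ and multiplying by $\rho$ yields \eqref{eq:avg_snr_uv_closed_form}.

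Finally, substituting \eqref{eq:avg_snr_uv_closed_form} into the grid approximation \eqref{eq:net_avg_SNR_grid} gives \eqref{eq:net_avg_SNR_grid_final} directly. The "$\approx$" is inherited solely from the discretization, not from the channel averaging, which is exact.
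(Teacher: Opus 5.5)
Your proposal is correct and follows essentially the same route as the paper's proof: you compute the per-antenna second moment from $\mathbb{E}[\xi_n]=e^{-\beta r^2}$, $|h_n^{\mathrm{LoS}}|^2=\eta/r^2$ and $\mathbb{E}|h_n^{\mathrm{NLoS}}|^2=\mu_n^2/r^2$, let the cross term vanish by independence of $\xi_n$ and the zero-mean NLoS component, then sum over $n$ and substitute into the grid approximation \eqref{eq:net_avg_SNR_grid}. Your explicit expansion of $|h_n|^2$ and use of $\xi_n^2=\xi_n$ simply spell out steps the paper leaves implicit.
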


{\emph{Proof:}} See Appendix \ref{appd: average snr}.  \hfill$\blacksquare$

\subsection{Problem Formulation}
\label{subsec:avg_snr_problem}

Based on \eqref{eq:net_avg_SNR_grid_final}, the traffic-weighted network average-SNR maximization problem is formulated as
\begin{subequations} \label{prob:P_Ave}
\begin{align}
\max_{\widetilde{\boldsymbol{x}}} \quad
& \bar{\Gamma}_{\mathrm{net}}(\widetilde{\boldsymbol{x}})
\label{prob:P_Ave_obj}\\
\text{s.t.}\quad
& 0 \le \widetilde{x}_n \le D_x,\quad \forall n \in \mathcal{N},
\label{prob:P_Ave_box}
\end{align}
\end{subequations}
where $\widetilde{\boldsymbol{x}}=[\widetilde{x}_1,\ldots,\widetilde{x}_N]^{\mathsf T}$ collects the pinching-antenna positions along the $N$ waveguides.
 
Problem \eqref{prob:P_Ave} has a simple box-constrained feasible set and a smooth objective:
$\bar{\Gamma}_{\mathrm{net}}(\widetilde{\boldsymbol{x}})$ is continuously differentiable in $\widetilde{\boldsymbol{x}}$
since each per-grid average SNR depends smoothly on the distances $r_{n,u,v}(\widetilde{x}_n)$.
However, \eqref{prob:P_Ave} is generally nonconvex because the average SNR term in
\eqref{eq:avg_snr_uv_closed_form} involves nonlinear distance-dependent components (e.g., the path-loss factor
$1/r_{n,u,v}^2(\widetilde{x}_n)$ and the distance-dependent LoS-probability factor embedded in the numerator), which can lead to multiple stationary points. 
Despite the nonconvexity of \eqref{prob:P_Ave}, the objective admits a favorable
separable structure across pinching antennas, which will be formalized in the
following lemma and will be exploited to develop low-complexity algorithms.

\begin{lem}\label{lem:net_traffic_problem_transform}
The traffic-weighted network average SNR in \eqref{prob:P_Ave} can be decoupled over the pinching antennas as
\begin{align}
    \bar{\Gamma}_{\mathrm{net}}(\widetilde{\boldsymbol{x}}) = \sum_{n=1}^{N} f_n(\widetilde{x}_n),
\end{align}
where the $n$-th component $f_n(\cdot)$ is given by
\begin{align}\label{eq:def_fn}
    f_n(\widetilde x) \triangleq \rho \sum_{u=1}^{N_h}\sum_{v=1}^{N_v} p_{u,v} \frac{\eta \exp\!\big(-\beta r_{n,u,v}^2(\widetilde x)\big)+\mu_n^2}{r_{n,u,v}^2(\widetilde x)} .
\end{align}
\end{lem}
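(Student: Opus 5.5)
The plan is to start from the closed-form grid-level expression in Lemma~\ref{lem:average_snr}, specifically \eqref{eq:net_avg_SNR_grid_final}, and interchange the finite sums. Since the triple sum over $u\in\{1,\ldots,N_h\}$, $v\in\{1,\ldots,N_v\}$ and $n\in\mathcal{N}$ has finitely many terms, the order of summation can be exchanged freely. This gives
\begin{align*}
\bar{\Gamma}_{\mathrm{net}}(\widetilde{\boldsymbol{x}})
= \sum_{n=1}^{N} \rho \sum_{u=1}^{N_h}\sum_{v=1}^{N_v} p_{u,v}\,
\frac{\eta \exp\!\big(-\beta r_{n,u,v}^2(\widetilde{x}_n)\big)+\mu_n^2}{r_{n,u,v}^2(\widetilde{x}_n)} .
\end{align*}

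The key step is to verify that the $n$-th inner summand depends on $\widetilde{\boldsymbol{x}}$ only through its own coordinate $\widetilde{x}_n$. By definition, $r_{n,u,v}^2(\widetilde{x}_n) = (x_u-\widetilde{x}_n)^2 + C_{n,v}$. Here $x_u$ and $C_{n,v}$ are fixed by the grid geometry, the waveguide index, and $d_v$, and they involve no other antenna positions. The constants $\rho$, $\eta$, $\beta$ and $\mu_n^2$ and the traffic weights $p_{u,v}$ are likewise independent of $\widetilde{\boldsymbol{x}}$, since the $p_{u,v}$ come from the traffic map alone. The $n$-th inner sum is therefore exactly $f_n(\widetilde{x}_n)$ as defined in \eqref{eq:def_fn}, and the decomposition follows.

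The only point needing care, and so the main potential obstacle, is that the additive per-antenna structure already in \eqref{eq:avg_snr_uv_closed_form} is legitimate. It rests on two facts. First, MRT gives $\Gamma_{u,v}=\rho\sum_n|h_n|^2$, a sum of per-antenna powers with no cross terms. Second, by linearity of expectation each $\mathbb{E}|h_n|^2$ involves only antenna $n$'s channel. Both facts were established in Lemma~\ref{lem:average_snr}, which we invoke as given. No coupling between antennas arises, so no further argument is required, and the decomposition is exact with respect to the grid approximation.
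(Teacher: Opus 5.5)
Your proposal is correct and follows the paper's proof: substitute the closed-form per-grid average SNR from Lemma~\ref{lem:average_snr} and interchange the finite sums over $(u,v)$ and $n$, so that each inner sum is $f_n(\widetilde{x}_n)$. Your explicit check that each summand depends on $\widetilde{\boldsymbol{x}}$ only through $\widetilde{x}_n$ spells out what the paper's final equality uses implicitly.
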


\emph{Proof:}
Starting from the definition of the traffic-weighted network average SNR and
substituting the closed-form per-grid average SNR in \eqref{eq:avg_snr_uv_closed_form}, we have
\begin{align}\label{eq:avg_obj_separable}
\bar{\Gamma}_{\mathrm{net}}(\widetilde{\boldsymbol{x}})
    &= \rho \sum_{u=1}^{N_h}\sum_{v=1}^{N_v} p_{u,v}
    \sum_{n=1}^{N}
    \frac{\eta \exp\!\big(-\beta r_{n,u,v}^2(\widetilde{x}_n)\big)+\mu_n^2}{r_{n,u,v}^2(\widetilde{x}_n)} \notag\\
    &\overset{(a)}{=}
    \sum_{n=1}^{N}
    \rho \sum_{u=1}^{N_h}\sum_{v=1}^{N_v} p_{u,v}\,
    \frac{\eta \exp\!\big(-\beta r_{n,u,v}^2(\widetilde{x}_n)\big)+\mu_n^2}{r_{n,u,v}^2(\widetilde{x}_n)} \notag\\
    &= \sum_{n=1}^{N} f_n(\widetilde{x}_n),
\end{align}
where step (a) follows by rearranging the finite sums (i.e., interchanging the
order of summation over $(u,v)$ and $n$). The proof is complete.
\hfill $\blacksquare$

Using Lemma~\ref{lem:net_traffic_problem_transform} and the separable box constraint set in \eqref{prob:P_Ave_box},
problem~\eqref{prob:P_Ave} decomposes into $N$ \emph{independent} scalar maximization problems:
\begin{align}\label{eq:avg_decomposed_1D}
    \max_{0\le x \le D_x} f_n(x),\quad \forall n\in\mathcal{N}.
\end{align}
Each $f_n(x)$ is continuously differentiable on $[0,D_x]$.
Nevertheless, $f_n(x)$ is generally nonconcave and may admit multiple local maximizers, since it is a traffic-weighted
superposition of many grid-wise terms that are individually peaked around different horizontal grid centers.

Despite this nonconcavity, \eqref{eq:avg_decomposed_1D} is substantially simpler than the original $N$-dimensional problem \eqref{prob:P_Ave}, as it involves only a single scalar variable with a compact feasible interval.
In principle, a globally optimal solution can be obtained by a one-dimensional search over $[0,D_x]$,
and the overall optimum of \eqref{prob:P_Ave} is then obtained by concatenating the $N$ independently optimized coordinates.
However, when $D_x$ is large and/or a fine search resolution is required, repeating such searches for all $N$ waveguides may still incur non-negligible complexity.
This motivates a dedicated solver that explicitly exploits the one-dimensional structure of
\eqref{eq:avg_decomposed_1D} while providing a principled way to handle the nonconcavity of $f_n(x)$.

\subsection{Problem Structure Analysis and Low-Complexity Algorithm Design}
In this section, we aim to develop low-complexity solution approaches to avoid
the prohibitive cost of exhaustive search. To this end, we start from the separable
per-waveguide formulation and further express each per-waveguide objective as a
traffic-weighted superposition of hotspot-induced components. We then
characterize the key structural properties of each component (e.g., smoothness
and unimodality). Building on these properties, we study the two-hotspot case in
detail to expose the fundamental merge--split behavior of the superposition,
which leads to a candidate-based low-complexity strategy for obtaining a high-quality solution by comparing only a small number of stationary and boundary candidates.
Finally, we extend the same principle to the general multi-hotspot case, yielding
scalable algorithms that remain effective as the number of hotspots grows.

To obtain clean structural insights, we work with the continuous-domain
definition of the traffic-weighted network average SNR, while the connection to the grid-based implementation will
be discussed in Remark~\ref{rem: continuous to discrete} at the end of this subsection.

Recall that, under the fully random-channel model adopted in this paper, the
traffic-weighted network average SNR can be written in a separable form as
\begin{align}
\bar{\Gamma}_{\mathrm{net}}(\widetilde{\boldsymbol{x}})
&= \sum_{n=1}^{N}\iint_{\mathcal{D}}
\bar{\Gamma}_n\!\big(x,y;\widetilde{x}_n\big)\,
f_{\boldsymbol{U}}(x,y)\, dx\,dy,
\label{eq:fn_def_continuous}
\end{align}
where $f_{\boldsymbol{U}}(x,y)$ is the spatial traffic density over the service
region $\mathcal{D}\subset\mathbb{R}^2$ given in \eqref{eq:pdf_hotspots}, and
$\bar{\Gamma}_n(x,y;\widetilde{x}_n)$ denotes the average SNR at location $(x,y,0)$
contributed by waveguide $n$ when its pinching antenna is placed at horizontal
coordinate $\widetilde{x}_n$. Specifically, the per-waveguide average SNR contribution is given by
\begin{equation}
    \bar{\Gamma}_n \big(x,y;\widetilde{x}_n\big) \triangleq \rho\,\frac{\eta\exp \!\big(-\beta r_n^2(x,y;\widetilde{x}_n)\big)+\mu_n^2}{r_n^2(x,y;\widetilde{x}_n)}.
    \label{eq:gamma_n_continuous}
\end{equation}

Next, substituting the Gaussian-mixture traffic model \eqref{eq:pdf_hotspots}
into \eqref{eq:fn_def_continuous} yields the hotspot decomposition
\begin{align}
\widetilde f_n(\widetilde{x}_n)
&= \iint_{\mathcal{D}} \bar{\Gamma}_n\!\big(x,y;\widetilde{x}_n\big)
\sum_{\ell=1}^{L} \alpha_{\ell}\,
\mathcal{N}\!\big([x,y]^{\mathsf T}; \boldsymbol{\mu}_{\ell},\boldsymbol{\Sigma}_{\ell}\big)\, dx\,dy \notag\\
&= \sum_{\ell=1}^{L} \alpha_{\ell}\, F_{n,\ell}(\widetilde{x}_n),
\label{eq:fn_gmm_decomp_cont}
\end{align}
where
\begin{equation}
F_{n,\ell}(\widetilde x_n)
\triangleq
\iint_{\mathcal{D}} \bar{\Gamma}_n\!\big(x,y;\widetilde x_n\big) 
\mathcal{N}\!\big([x,y]^{\mathsf T}; \boldsymbol{\mu}_{\ell},\boldsymbol{\Sigma}_{\ell}\big)\, dx\,dy.
\label{eq:Fnell_def_cont}
\end{equation}
Here, $F_{n,\ell}(\widetilde x_n)$ represents the contribution of hotspot $\ell$ to the
per-waveguide objective along waveguide $n$, and \eqref{eq:fn_gmm_decomp_cont}
shows that $\widetilde f_n(\cdot)$ is a traffic-weighted superposition of hotspot-induced
component functions. This representation will serve as the starting point for
the subsequent two-hotspot structural analysis.

\subsubsection{Shape of Each Hotspot-Induced Component $F_{n,\ell}(\widetilde x_n)$}
\label{subsubsec:shape_component}

We now study the structural properties of each hotspot-induced component
$F_{n,\ell}(\widetilde x_n)$ defined in \eqref{eq:Fnell_def_cont}. Throughout this
subsubsection, we fix a waveguide index $n$ and a hotspot index $\ell$, and view
$F_{n,\ell}(\cdot)$ as a scalar function of the pinching-antenna coordinate
$\widetilde x_n\in[0,D_x]$. For ease of analysis, we make the following assumption.

\begin{assumption} 
\label{ass:negligible_truncation}
The hotspot centers $\{\boldsymbol{\mu}_\ell\}_{\ell=1}^{L}$ lie well inside the
service region $\mathcal{D}=[0,D_x]\times[-D_y/2,D_y/2]$, and the traffic density
outside $\mathcal{D}$ is negligible.
\end{assumption}

Under Assumption~\ref{ass:negligible_truncation}, the key structural properties
of each hotspot-induced component $F_{n,\ell}(\widetilde x_n)$ are summarized in
the following lemma.

\begin{lem}
\label{lem:Fnell_unimodal}
Suppose that Assumption~\ref{ass:negligible_truncation} holds. Then, for any
waveguide $n$ and hotspot $\ell$, the function $F_{n,\ell}(\widetilde x_n)$
defined in \eqref{eq:Fnell_def_cont} is smooth and unimodal in $\widetilde x_n$
over $[0,D_x]$, and it attains its maximum at the hotspot horizontal center
$\widetilde x_n^\star=\mu_{\ell,x}$.
\end{lem}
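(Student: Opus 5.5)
The plan is to reduce $F_{n,\ell}$ to a one-dimensional convolution of a symmetric, radially decreasing kernel with a Gaussian, and then use a layer-cake argument. By Assumption~\ref{ass:negligible_truncation}, I will replace the integration domain $\mathcal{D}$ in \eqref{eq:Fnell_def_cont} by $\mathbb{R}^2$. The resulting error is negligible and independent of the structural conclusion. Using the diagonal covariance $\boldsymbol{\Sigma}_\ell$, the Gaussian factorizes as $\phi_{x}(x)\phi_{y}(y)$, with $\phi_x$ the $\mathcal{N}(\mu_{\ell,x},\sigma_{\ell,x}^2)$ density. With $C_n(y)=(y-\widetilde y_n)^2+d_v^2\ge d_v^2>0$, define the kernel
\begin{equation*}
g_C(t)\triangleq \rho\,\frac{\eta e^{-\beta(t^2+C)}+\mu_n^2}{t^2+C}.
\end{equation*}
Then
\begin{equation*}
F_{n,\ell}(\widetilde x_n)=\int_{\mathbb{R}}\phi_y(y)\,H_{C_n(y)}(\widetilde x_n-\mu_{\ell,x})\,dy,\qquad H_C(z)\triangleq\int_{\mathbb{R}} g_C(t)\,\varphi_{\sigma_{\ell,x}}(z-t)\,dt,
\end{equation*}
where $\varphi_\sigma$ is the zero-mean Gaussian density; this uses $x-\widetilde x_n=t$ and the evenness of the Gaussian. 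For each fixed $C>0$, $g_C$ is bounded, even, strictly decreasing in $|t|$ (both $e^{-\beta s}/s$ and $1/s$ decrease in $s=t^2+C$), and integrable because it decays like $t^{-2}$.

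The key step is to show that $H_C$ is even and strictly decreasing on $(0,\infty)$. I will use the layer-cake representation $g_C(t)=\int_0^{\infty}\mathbf{1}\{|t|<a(s)\}\,ds$, where the superlevel sets $\{g_C>s\}$ are symmetric intervals $(-a(s),a(s))$ by monotonicity. Then
\begin{equation*}
H_C(z)=\int_0^\infty\Big[\Phi\big(\tfrac{z+a(s)}{\sigma}\big)-\Phi\big(\tfrac{z-a(s)}{\sigma}\big)\Big]ds,
\end{equation*}
and each bracket is even in $z$ with derivative $\tfrac1\sigma[\varphi_1(\tfrac{z+a}{\sigma})-\varphi_1(\tfrac{z-a}{\sigma})]<0$ for $z>0$ and $a>0$. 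Integrating against the nonnegative measure $ds$ (over a set of $s$ of positive measure with $a(s)>0$) preserves evenness and strict decrease. Integrating once more against $\phi_y(y)\,dy\ge0$ shows that $F_{n,\ell}$ is even about $\mu_{\ell,x}$ and strictly decreasing in $|\widetilde x_n-\mu_{\ell,x}|$. Since $\mu_{\ell,x}\in[0,D_x]$ by Assumption~\ref{ass:negligible_truncation}, the restriction to $[0,D_x]$ is unimodal with maximizer $\widetilde x_n^\star=\mu_{\ell,x}$. An alternative route for this step is to invoke Wintner's theorem that the convolution of two symmetric unimodal densities is symmetric unimodal, but the explicit $\Phi$ computation is self-contained.

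Smoothness follows by placing all $\widetilde x_n$-dependence on the Gaussian through the change of variables above. Since $g_C\le\rho(\eta+\mu_n^2)/d_v^2$ uniformly, and every derivative of $\varphi_\sigma$ is integrable, dominated convergence allows differentiation under the integral to any order; hence $F_{n,\ell}\in C^\infty$.

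I expect the main obstacle to be the rigor of the truncation to $\mathcal{D}$. Over the bounded domain, exact symmetry about $\mu_{\ell,x}$ breaks, so the statement can only hold exactly for the untruncated integral, with an approximation error controlled by the Gaussian tail mass outside $\mathcal{D}$. This is precisely the regime Assumption~\ref{ass:negligible_truncation} is meant to cover, and I will state the argument for the $\mathbb{R}^2$ version. A secondary point is verifying strictness rather than mere monotonicity, which requires checking that $a(s)>0$ on a set of positive $ds$-measure; this holds because $g_C>0$ everywhere.
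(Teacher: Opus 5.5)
Your argument is correct and shares the paper's skeleton. Both proofs pass from $\mathcal{D}$ to $\mathbb{R}^2$ via Assumption~\ref{ass:negligible_truncation}; the paper does this silently when it rewrites $F_{n,\ell}$ as an expectation, so your caveat that exact symmetry holds only for the untruncated integral applies to the paper as well. Both then factor the diagonal Gaussian and reduce $F_{n,\ell}$ to a convolution of an even kernel, strictly decreasing in $|t|$, with the $\mathcal{N}(\mu_{\ell,x},\sigma_{\ell,x}^2)$ density.

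The execution differs in two ways. First, the paper averages over $Y_\ell$ first, obtaining a single kernel $g_{n,\ell}(u)=\phi_{n,\ell}(u^2)$ and one convolution. You instead convolve each $y$-slice kernel $g_{C_n(y)}$ and mix over $y$ afterwards; this is equivalent by Tonelli, since strict decrease in $|z|$ survives positive mixtures. Second, the paper cites a Wintner-type theorem (Purkayastha, Thm.~2.1) for unimodality of the convolution and proves symmetry about $\mu_{\ell,x}$ separately by a change of variables. You prove the convolution step yourself through the layer-cake decomposition into symmetric intervals, which gives evenness and strict decrease at once; this is essentially the standard proof of the result the paper cites.

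The paper's version is shorter. Yours is self-contained and actually proves parts of the statement that the paper only asserts:
\begin{itemize}
\item a unique maximizer;
\item smoothness, via dominated convergence;
\item the strict signs $F'_{n,\ell}(x)>0$ for $x<\mu_{\ell,x}$ and $F'_{n,\ell}(x)<0$ for $x>\mu_{\ell,x}$, provided you differentiate the brackets under the $ds$-integral.
\end{itemize}
That differentiation is justified because only $s\in(0,g_C(0))$ contributes and the bracket derivatives are bounded by $\varphi_1(0)/\sigma$. The paper uses these strict signs in \eqref{eq:deriv_at_mu1}--\eqref{eq:deriv_at_mu2} but infers them from unimodality alone, which by itself does not rule out isolated zeros of the derivative.
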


\emph{Proof:} See Appendix~\ref{appd:F_symmetry}.
\hfill $\blacksquare$

\smallskip

\subsubsection{Two-Hotspot Case: Merge--Split Behavior}
\label{subsubsec:two_hotspot_merge_split}

We now specialize to the two-hotspot case ($L=2$) and study when the per-waveguide
objective
\begin{equation}
\widetilde f_n(x)=\alpha_1 F_{n,1}(x)+\alpha_2 F_{n,2}(x)
\label{eq:two_hotspot_fn_general}
\end{equation}
exhibits a \emph{single} dominant peak (``merged'' regime) or \emph{two} distinct
peaks (``split'' regime). This characterization is useful because it explains
why multimodality may arise even though each hotspot-induced component
$F_{n,\ell}(x)$ is unimodal (Lemma~\ref{lem:Fnell_unimodal}), and it enables
candidate-set construction that searches only a small number of informative
points.

Since $f_n(x)$ is differentiable, any interior maximizer $x^\star\in(0,D_x)$
must satisfy the first-order condition
\begin{equation}
\widetilde f_n'(x^\star)=\alpha_1 F_{n,1}'(x^\star)+\alpha_2 F_{n,2}'(x^\star)=0.
\label{eq:stationary_general}
\end{equation}
Intuitively, \eqref{eq:stationary_general} states that at an interior stationary
point the (traffic-weighted) pull from hotspot~1 (through $F_{n,1}'$) is
balanced by that from hotspot~2 (through $F_{n,2}'$).

Under Assumption~\ref{ass:negligible_truncation} and Lemma~\ref{lem:Fnell_unimodal},
each component $F_{n,\ell}(x)$ is smooth and unimodal over $[0,D_x]$ with a
unique maximizer at $x=\mu_{\ell,x}$. Hence, $F_{n,\ell}'(x)>0$ for
$x<\mu_{\ell,x}$ and $F_{n,\ell}'(x)<0$ for $x>\mu_{\ell,x}$. Without loss
of generality, assume $\mu_{1,x}<\mu_{2,x}$. Then, we have

\begin{small}
\begin{align}
\widetilde f_n'(\mu_{1,x})
&\!=\!\alpha_1 F_{n,1}'(\mu_{1,x})\!+\!\alpha_2 F_{n,2}'(\mu_{1,x}) \!=\! \alpha_2 F_{n,2}'(\mu_{1,x})\!>\! 0, \label{eq:deriv_at_mu1}\\
\widetilde f_n'(\mu_{2,x})
&\!=\!\alpha_1 F_{n,1}'(\mu_{2,x})\!+\!\alpha_2 F_{n,2}'(\mu_{2,x})\!=\!\alpha_1 F_{n,1}'(\mu_{2,x})\!<\!0. \label{eq:deriv_at_mu2}
\end{align}
\end{small}%
By continuity of $f_n'(x)$ and the intermediate value theorem \cite{courant1965introduction}, there exists
$x_b\in(\mu_{1,x},\mu_{2,x})$ such that $f_n'(x_b)=0$.

Although each $F_{n,\ell}$ is unimodal, the superposition $f_n$ may be
unimodal (merged regime) or bimodal (split regime). These behaviors are summarized in the following proposition.

\begin{prop}
\label{prop:two_hotspot_merge_split}
Assume that Assumption~\ref{ass:negligible_truncation} holds and
$\mu_{1,x}<\mu_{2,x}$. Then, $\widetilde f_n(x)$ has at least one stationary point
in $(\mu_{1,x},\mu_{2,x})$, i.e., there exists $x_b\in(\mu_{1,x},\mu_{2,x})$
such that $\widetilde f_n'(x_b)=0$. Moreover, depending on the number of stationary
points in $(\mu_{1,x},\mu_{2,x})$, $\widetilde f_n(x)$ exhibits the following regimes:
\begin{enumerate}
    \item (\emph{Merged regime}) If $\widetilde f_n$ has exactly one stationary point, denoted $x_b$, and $\widetilde f_n''(x_b)<0$, then $\widetilde f_n$ has
    a single dominant peak between the two hotspots; i.e., the maximizer of $\widetilde f_n$
    over $[\mu_{1,x},\mu_{2,x}]$ is unique and occurs at $x_b$.
    \item (\emph{Split regime}) If $\widetilde f_n$ has three stationary points in
    $(\mu_{1,x},\mu_{2,x})$, denoted $x_L<x_M<x_R$, such that
    \begin{equation}
    \widetilde f_n''(x_L)<0,\ \widetilde f_n''(x_M)>0,\ \widetilde f_n''(x_R)<0,
    \end{equation}
    then $\widetilde f_n$ has two distinct local maximizers $x_L$ and $x_R$ separated by a
    local minimizer $x_M$.
\end{enumerate}
\end{prop}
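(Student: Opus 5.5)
The plan is to work entirely with the sign pattern of $\widetilde f_n'$ on the interval $[\mu_{1,x},\mu_{2,x}]$, taking Lemma~\ref{lem:Fnell_unimodal} as the only structural input. The existence part is already set up in the text. By Lemma~\ref{lem:Fnell_unimodal}, $F_{n,1}'(\mu_{1,x})=0$ and $F_{n,2}'(\mu_{1,x})>0$, which gives \eqref{eq:deriv_at_mu1}. Likewise \eqref{eq:deriv_at_mu2} holds. Since $\widetilde f_n$ is smooth, $\widetilde f_n'$ is continuous, and the intermediate value theorem yields some $x_b\in(\mu_{1,x},\mu_{2,x})$ with $\widetilde f_n'(x_b)=0$. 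I will state this first, since both regimes rely on the boundary signs $\widetilde f_n'(\mu_{1,x})>0$ and $\widetilde f_n'(\mu_{2,x})<0$.

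\emph{Merged regime.} Suppose $x_b$ is the only zero of $\widetilde f_n'$ in $(\mu_{1,x},\mu_{2,x})$. Then $\widetilde f_n'$ has constant sign on $[\mu_{1,x},x_b)$. Otherwise continuity would produce a second zero, and the endpoint value fixes that sign as positive. By the same argument $\widetilde f_n'<0$ on $(x_b,\mu_{2,x}]$. So $\widetilde f_n$ is strictly increasing and then strictly decreasing on $[\mu_{1,x},\mu_{2,x}]$, and $x_b$ is its unique maximizer there. The condition $\widetilde f_n''(x_b)<0$ confirms that $x_b$ is a nondegenerate local maximum. Strictly speaking it is implied by the sign change, apart from degeneracy.

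\emph{Split regime.} Suppose the zeros of $\widetilde f_n'$ in the interval are exactly $x_L<x_M<x_R$, with the stated second-derivative signs. Then $\widetilde f_n'$ has constant sign on each of the four open subintervals between consecutive points of $\{\mu_{1,x},x_L,x_M,x_R,\mu_{2,x}\}$. The endpoint signs together with the second-order conditions determine the pattern $+,-,+,-$:
\begin{itemize}
\item $\widetilde f_n''(x_L)<0$ forces $\widetilde f_n'$ to pass from positive to negative at $x_L$.
\item $\widetilde f_n''(x_M)>0$ forces a passage from negative to positive at $x_M$.
\item $\widetilde f_n''(x_R)<0$ forces a passage from positive to negative at $x_R$.
\end{itemize}
These signs are mutually consistent with the endpoint values, so $x_L$ and $x_R$ are strict local maximizers and $x_M$ is a strict local minimizer between them. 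The second-derivative test gives these local statements directly. The sign-pattern argument upgrades them to monotonicity on each subinterval.

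The main obstacle is a matter of interpretation rather than difficulty. The proposition does not claim the regimes are exhaustive, so the proof only needs to verify each conditional statement. The one point of care is making "exactly one / exactly three stationary points" do the work, namely excluding additional sign changes through continuity. I would also remark, without proof, that an even number of simple zeros is impossible given the boundary signs.
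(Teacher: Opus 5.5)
Your proposal is correct and follows essentially the same route as the paper's proof in Appendix~\ref{appd:prop_two_hotspot}. Existence comes from the boundary signs \eqref{eq:deriv_at_mu1}--\eqref{eq:deriv_at_mu2} plus the intermediate value theorem; the merged regime comes from the fact that a unique zero of $\widetilde f_n'$ forces a single positive-to-negative sign change, hence increase-then-decrease and a unique maximizer at $x_b$; and the split regime comes from the second-order conditions at $x_L$, $x_M$, $x_R$. Your explicit $+,-,+,-$ sign bookkeeping on the subintervals, and your observation that $\widetilde f_n''(x_b)<0$ is not needed for the merged-regime conclusion, are slightly more careful than the paper's argument but fully consistent with it.
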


\emph{Proof:} See Appendix \ref{appd:prop_two_hotspot}. 
\hfill $\blacksquare$

\smallskip

Proposition~\ref{prop:two_hotspot_merge_split} suggests that, in the two-hotspot case,
it suffices to compare $f_n(x)$ over a small candidate set formed by interior
stationary points in $(\mu_{1,x},\mu_{2,x})$ and (when relevant) the endpoints.
A practical way to locate interior stationary points is to apply a
\emph{bracketing--bisection} procedure to the continuous derivative $h(x)\triangleq \widetilde f_n'(x)$.

\smallskip
\noindent\emph{Step 1 (Bracket sign changes).}
Since $h$ is continuous and satisfies $h(\mu_{1,x})>0$ and $h(\mu_{2,x})<0$,
there exists at least one root in $(\mu_{1,x},\mu_{2,x})$.
To identify sign-changing stationary-point candidates, we discretize
the interval into $J$ subintervals:
\[
x^{(0)}=\mu_{1,x}<x^{(1)}<\cdots<x^{(J)}=\mu_{2,x}.
\]
Evaluate $h(x^{(j)})$ for $j=0,\ldots,J$. For each adjacent pair
$[x^{(j)},x^{(j+1)}]$, if
\[
h(x^{(j)})\,h(x^{(j+1)})<0,
\]
then by the intermediate value theorem there exists at least one root
in $(x^{(j)},x^{(j+1)})$; this interval is a \emph{bracket} for a stationary
point of $f_n$.

\smallskip
\noindent\emph{Step 2 (Bisection within each bracket).}
For each bracket $[a,b]$ with $h(a)h(b)<0$, perform bisection:
set $c=(a+b)/2$; if $h(a)h(c)\le 0$ replace $b\leftarrow c$, else replace
$a\leftarrow c$. Repeat until $|b-a|\le \varepsilon$, and output
$x^\star\approx (a+b)/2$ as a stationary point.

\smallskip
\noindent\emph{Step 3 (Classify and build candidates).}
Compute $\widetilde f_n''(x^\star)$ (or use the sign change of $h$ around $x^\star$) to
classify the stationary point as a local maximizer/minimizer, and add the
maximizer-type points to the candidate set (merged: one point; split: two points).

\begin{rem} \label{rem: continuous to discrete}
The above structural analysis is established for the continuous-domain traffic model and is mainly used to reveal how different traffic hotspots jointly shape the per-waveguide objective. In practical implementation, we employ the grid-based approximation in \eqref{eq:avg_obj_separable}, which replaces the continuous traffic integral by a finite weighted sum and can be evaluated with substantially lower complexity. We emphasize that, for a finite grid resolution, the resulting grid-based objective is an approximation of its continuous-domain counterpart, and the unimodality and merge–split properties established above are therefore not guaranteed to hold exactly. Nevertheless, as the grid becomes sufficiently fine, the grid-based objective provides an increasingly accurate approximation of the continuous one. Motivated by the revealed continuous-domain structure, we apply a coarse-to-fine stationary-point search directly to the grid-based objective in the subsequent implementation.
\end{rem}

\subsubsection{Extension to the Multi-Hotspot Case}
\label{subsec:multi_hotspot_extension}

For $L>2$, the per-waveguide objective remains
\begin{equation}
f_n(x)=\sum_{\ell=1}^{L}\alpha_\ell F_{n,\ell}(x),
\end{equation}
and any interior maximizer $x^\star\in(0,D_x)$ must satisfy the equation
\begin{equation}
f_n'(x^\star)=\sum_{\ell=1}^{L}\alpha_\ell F'_{n,\ell}(x^\star)=0.
\label{eq:multi_stationary}
\end{equation}
Under Assumption~\ref{ass:negligible_truncation} and Lemma~\ref{lem:Fnell_unimodal},
each $F_{n,\ell}(x)$ is smooth and unimodal with a unique maximizer at
$x=\mu_{\ell,x}$, so $\widetilde f_n(x)$ may admit multiple stationary points and,
consequently, multi-peak profiles.

This observation motivates a scalable candidate-set construction: sort the
hotspot centers $\mu_{(1),x}\le\cdots\le\mu_{(L),x}$ and search for stationary
points of $f_n$ within each adjacent interval $(\mu_{(k),x},\mu_{(k+1),x})$,
$k=1,\ldots,L-1$, via coarse bracketing of sign changes in $\widetilde f_n'(x)$ followed by
bisection refinement. Collect the resulting stationary candidates and compare
$\widetilde f_n(x)$ only on this small set, instead of performing dense exhaustive search.
In the grid-based model, the same procedure is applied to the discrete surrogate
$f_n$ and its analytical derivative $f_n'$, so that bracketing and candidate comparison can be carried out
directly on the grid with low complexity.

\subsection{Extension to Discrete Pinching-Antenna Architectures}
The above formulation assumes continuously adjustable pinching-antenna
positions. In practical implementations, a finite set of pre-installed
pinching antennas may instead be deployed along each waveguide, with each
antenna operating in an activated/deactivated mode
\cite{tyrovolas2026many}. Let
$\mathcal{X}_n=\{x_{n,1},\ldots,x_{n,M}\}$ denote the candidate
locations on waveguide $n$, and let $a_{n,m}\in\{0,1\}$ indicate whether
candidate $m$ is activated, with $\sum_{m=1}^{M}a_{n,m}=1$.
Then, the traffic-weighted network average-SNR problem becomes
\begin{align}
\max_{\{a_{n,m}\}}\quad
&\sum_{n=1}^{N}\sum_{m=1}^{M}a_{n,m}f_n(x_{n,m})
\nonumber\\
\mathrm{s.t.}\quad
&\sum_{m=1}^{M}a_{n,m}=1,\quad
a_{n,m}\in\{0,1\},\ \forall n,m .
\end{align}
Importantly, the separable structure in Lemma \ref{lem:net_traffic_problem_transform} is preserved, and hence
the optimal activation on each waveguide is obtained independently
as
\begin{align}
m_n^\star=\arg\max_{m\in\{1,\ldots,M\}}f_n(x_{n,m}),\quad \forall n.
\end{align}
Therefore, the continuous one-dimensional optimization and stationary-point
search reduce to a finite comparison over the $M$ candidate locations,
with complexity linear in $M$ for each waveguide. This shows that the
proposed traffic-aware framework can be directly adapted to practical
two-state discrete pinching-antenna architectures.

\section{Traffic-Restricted Max--Min Average-SNR Optimization}
\label{sec:traffic_restricted_maxmin}

In this section, we consider a fairness-oriented design criterion that
aims to enhance the worst covered part of the network, while explicitly
taking into account the user traffic distribution.
Instead of enforcing coverage uniformly over the entire communication
area, including regions where users almost never appear, we restrict
our attention to a set of active grids with sufficiently large
traffic weights, and maximize the minimum local average SNR over this
set by optimizing the pinching-antenna positions.

\subsection{Active Grid Set and Problem Formulation}
\label{subsec:TRMM_formulation}

Recall from Section~\ref{subsec:grid_metrics} that the communication
region is discretized into grids indexed by
\[
  \Omega \triangleq \{1,\ldots,N_h\} \times \{1,\ldots,N_v\},
\]
with grid centers $\boldsymbol{\psi}_{u,v}$ and traffic weights
$\{p_{u,v}\}_{(u,v)\in\Omega}$ obtained from the Gaussian hotspot
model.
Given a traffic threshold $p_{\mathrm{th}} \in (0,1)$, we define the active grid set as
\begin{align}
  \Omega_{\mathrm{act}}
  \triangleq
    \big\{ (u,v)\in\Omega : p_{u,v} \ge p_{\mathrm{th}} \big\},
  \label{eq:Omega_act_def}
\end{align}
which contains the grids where users appear with non-negligible
probability.
Grids with $p_{u,v} < p_{\mathrm{th}}$ are excluded from the fairness
criterion, so that the design is not dominated by extremely low-traffic
regions.

We note that the threshold $p_{\rm th}$ controls the 
spatial scope over which fairness
is enforced. A smaller threshold includes more low-traffic grids and thus
provides fairness protection over a broader region, whereas a larger
threshold focuses the design on traffic-dominant regions. In practical
deployments, $p_{\rm th}$ can be selected according to the desired service
scope. For example, it can be chosen such that the resulting active set
covers a prescribed fraction of the total traffic probability. In this
work, we adopt the normalized form
$p_{\rm th}=\tau\max_{u,v}p_{u,v}$, which avoids dependence on the
absolute scale of the traffic map. The impact of the threshold selection
will be further examined in the simulation section.

The traffic-restricted worst-case average SNR is defined as
\begin{align}
\Gamma_{\min}^{\mathrm{(tr)}}(\widetilde{\boldsymbol{x}})
  \triangleq
    \min_{(u,v)\in \Omega_{\mathrm{act}}} \rho \sum_{n=1}^{N}
       \frac{
         \eta\, \exp\!\big(-\beta r_{n,u,v}^2(\widetilde{x}_n)\big)
         + \mu_n^2
       }{
         r_{n,u,v}^2(\widetilde{x}_n)
       },
  \label{eq:avg_snr_uv_TR}
\end{align}
where $\bar{\Gamma}_{u,v}(\widetilde{\boldsymbol{x}})$, given in \eqref{eq:avg_snr_uv_closed_form}, is the local average SNR in grid $(u,v)$ under deployment
$\widetilde{\boldsymbol{x}} = [\widetilde{x}_1,\ldots,\widetilde{x}_N]^{\mathsf T}$. 
$\Gamma_{\min}^{\mathrm{(tr)}}(\widetilde{\boldsymbol{x}})$ measures the coverage quality of the most disadvantaged active grid.

Based on the above model, the traffic-restricted max-min average SNR optimization problem is formulated as
\begin{subequations}
    \label{prob:P_MM_TR}
    \begin{align}
      \max_{\widetilde{\boldsymbol{x}}} \quad
      & \Gamma_{\min}^{\mathrm{(tr)}}(\widetilde{\boldsymbol{x}})\\
      \text{s.t.}\quad
      & 0 \le \widetilde{x}_n \le D_x,\quad \forall n \in \mathcal{N}.  
    \end{align}
\end{subequations}
Problem \eqref{prob:P_MM_TR} is, however, challenging to solve directly.
First, the objective is a max--min of highly nonlinear functions of
$\widetilde{\boldsymbol{x}}$, since each $\bar{\Gamma}_{u,v}(\widetilde{\boldsymbol{x}})$
in \eqref{eq:avg_snr_uv_TR} depends on the pinching-antenna positions
through the distances $r_{n,u,v}(\widetilde{x}_n)$ and the
LoS probabilities $\exp(-\beta r_{n,u,v}^2(\widetilde{x}_n))$,
leading to a highly nonconvex landscape with potentially many local optima.
Second, the outer minimum over $(u,v)\in\Omega_{\mathrm{act}}$
introduces additional nonsmoothness, so that improving the average SNR of some
grids may not immediately increase the worst-case value. Although global
exhaustive or grid-search-based methods are still possible in principle, their computational complexity grows explosively with $N_h$, $N_v$, and $N$ and quickly becomes prohibitive.

On the other hand, in the original formulation \eqref{prob:P_MM_TR} the
pinching-antenna positions are separable both in the average-SNR expressions and in the constraints.
This structure naturally motivates a BCD 
approach.
By updating one pinching-antenna position at a time and fixing the
others, the multi-dimensional max--min problem is reduced to a sequence
of one-dimensional subproblems, each with a simplified objective that
can be handled much more efficiently.

\subsection{BCD Method for the Original Max-Min Problem \eqref{prob:P_MM_TR}}
\label{subsec:TRMM_BCD_from_original}

To handle the nonconvex high-dimensional problem
\eqref{prob:P_MM_TR}, we adopt a BCD 
method.
At each iteration, we select one pinching antenna and adjust its
position along the waveguide while keeping all other antennas fixed.

Let $\widetilde{\boldsymbol{x}}^{(k)}$ denote the deployment at
iteration $k$.
From \eqref{eq:avg_snr_uv_TR}, the local average SNR in grid $(u,v)$
can be decomposed as
\begin{align}
  \bar{\Gamma}_{u,v}(\widetilde{\boldsymbol{x}})
  &= \rho \sum_{m=1}^{N}
       \frac{
         \eta\, \exp\!\big(-\beta r_{m,u,v}^2(\widetilde{x}_m)\big)
         + \mu_m^2
       }{
         r_{m,u,v}^2(\widetilde{x}_m)
       } \notag\\
  &= A_{u,v}^{(-n)} + g_{n,u,v}(\widetilde{x}_n),
  \quad (u,v)\in\Omega,
  \label{eq:TRMM_decomposition_original}
\end{align}
where
\begin{align}
  &A_{u,v}^{(-n)}
  \triangleq
    \rho \sum_{\substack{m=1\\ m\neq n}}^{N}
      \frac{
        \eta\, \exp\!\big(-\beta r_{m,u,v}^2(\widetilde{x}_m^{(k)})\big)
        + \mu_m^2
      }{
        r_{m,u,v}^2(\widetilde{x}_m^{(k)})
      },
    \label{eq:A_uv_minusn_TR_original}\\
  &g_{n,u,v}(\widetilde x_n)
  \triangleq
    \rho\,
    \frac{
      \eta\, \exp\!\big(-\beta r_{n,u,v}^2(\widetilde{x}_n)\big) + \mu_n^2
    }{
      r_{n,u,v}^2(\widetilde{x}_n)
    }.
    \label{eq:g_n_uv_TR_original}
\end{align}
Here, $A_{u,v}^{(-n)}$ collects the contribution of all antennas except
$n$ and is fixed during the update of $\widetilde{x}_n$, while
$g_{n,u,v}(\widetilde{x}_n)$ captures the contribution of antenna $n$ as a function
of its position $\widetilde{x}_n$.

Substituting \eqref{eq:TRMM_decomposition_original} into
\eqref{eq:avg_snr_uv_TR}, and fixing
$\{\widetilde{x}_m^{(k)}\}_{m\neq n}$, the original traffic-restricted
max--min objective becomes a one-dimensional function of $\widetilde{x}_n$, which is given by 
\begin{align}
  \Gamma_{\min}^{\mathrm{(tr)}}(\widetilde{x}_n)
  &= \min_{(u,v)\in\Omega_{\mathrm{act}}}
       \big( A_{u,v}^{(-n)} + g_{n,u,v}(\widetilde{x}_n) \big).
\end{align}
Therefore, the BCD update for antenna $n$ at iteration $k$ is given by
the one-dimensional max-min subproblem
\begin{align} 
  \max_{0 \le \widetilde{x}_n \le D_x} \quad
  & F_n^{\mathrm{(tr)}}(\widetilde{x}_n)
    \triangleq
      \min_{(u,v)\in\Omega_{\mathrm{act}}}
        \big( A_{u,v}^{(-n)} + g_{n,u,v}(\widetilde{x}_n) \big).
  \label{prob:P_MM_TR_n_original}
\end{align}
As seen, starting directly from the original max-min formulation
\eqref{prob:P_MM_TR}, the BCD scheme is naturally motivated.
For each antenna, we move its pinching position along the waveguide to
improve the worst local average SNR over all active grids, while the
contributions of the other antennas remain fixed in $A_{u,v}^{(-n)}$.
In what follows, we show that each one-dimensional subproblem can be
globally solved using a low-complexity bisection-based algorithm.

\subsection{Epigraph Reformulation and Bisection-Based Method for the One-Dimensional Subproblem \eqref{prob:P_MM_TR_n_original}} \label{subsec:TRMM_1D_bisection}

Although the coordinate subproblem \eqref{prob:P_MM_TR_n_original} is one-dimensional, it remains nonconvex due to the
max--min structure. Our key observation is that, under the smooth average-SNR model
\eqref{eq:TRMM_decomposition_original}--\eqref{eq:g_n_uv_TR_original}, the feasibility set induced by each grid constraint
has an interval structure, and the overall feasibility reduces to an interval intersection test.
This converts the global solution of the nonconvex one-dimensional max-min problem into a low-complexity nested bisection routine.

\smallskip
For a fixed antenna $n$ and given $\{A_{u,v}^{(-n)}\}$, the one-dimensional subproblem \eqref{prob:P_MM_TR_n_original}
admits the following epigraph reformulation:
\begin{subequations}\label{prob:P_MM_TR_n_epigraph_final}
\begin{align}
  \max_{0 \le \widetilde{x}_n \le D_x,\, t} \ 
  & t \label{prob:P_MM_TR_n_epigraph_final_a}\\
  \text{s.t.}\ 
  & A_{u,v}^{(-n)} + g_{n,u,v}(\widetilde{x}_n) \ge t,
    \  \forall (u,v)\in\Omega_{\mathrm{act}}, \label{prob:P_MM_TR_n_epigraph_final_b}
\end{align}
\end{subequations}
where $t$ represents the worst local average SNR over the active grids under the current update of antenna $n$.
While \eqref{prob:P_MM_TR_n_epigraph_final} is still nonconvex, its \emph{global} optimum can be obtained efficiently by exploiting
the monotone feasibility structure in $t$, as summarized below.

\begin{prop} 
\label{prop:TRMM_nested_bisection}
Fix antenna $n$ and $\{A_{u,v}^{(-n)}\}$. For each $(u,v)\in\Omega_{\mathrm{act}}$, define the per-grid feasibility set
\begin{align}
\mathcal{X}_{u,v}(t)
\triangleq
\Big\{
\widetilde{x}_n \in [0,D_x]
\,\Big|\,
A_{u,v}^{(-n)} + g_{n,u,v}(\widetilde{x}_n) \ge t
\Big\},
\label{eq:X_uv_t_TR}
\end{align}
and the overall feasibility set
\begin{align}
\mathcal{X}_n^{\mathrm{(tr)}}(t)
\triangleq
\bigcap_{(u,v)\in\Omega_{\mathrm{act}}} \mathcal{X}_{u,v}(t).
\label{eq:X_n_tr_t_final}
\end{align}
Then, the following properties hold
\begin{enumerate}
\item \emph{(Interval structure).}
For each $(u,v)\in\Omega_{\mathrm{act}}$, since
$g_{n,u,v}(x)$ is maximized at $x=x_u$ and decreases with
$|x-x_u|$, the feasible set $\mathcal{X}_{u,v}(t)$ is either empty
or a (possibly clipped) interval in $[0,D_x]$. When its boundary
lies within the feasible region, it can be written as
\begin{align}
\mathcal{X}_{u,v}(t)
=
[x_u-d_{u,v}(t),\,x_u+d_{u,v}(t)]\cap[0,D_x],
\end{align}
where $d_{u,v}(t)\geq 0$ is uniquely determined by
\begin{align} \label{eq:d_equation}
    A_{u,v}^{(-n)} +g_{n,u,v}\big(x_u+d_{u,v}(t)\big)=t. 
\end{align}
If the constraint is satisfied over the entire feasible region,
then $\mathcal{X}_{u,v}(t)=[0,D_x]$.
Consequently, $\mathcal{X}_{n}^{(\mathrm{tr})}(t)$ is either empty
or an interval in $[0,D_x]$.

\item \emph{(Nestedness).} For any $t_1\le t_2$, we have
$
\mathcal{X}_{u,v}(t_2)\subseteq \mathcal{X}_{u,v}(t_1)
$
for all $(u,v)\in\Omega_{\mathrm{act}}$, and hence
$
\mathcal{X}_n^{\mathrm{(tr)}}(t_2)\subseteq \mathcal{X}_n^{\mathrm{(tr)}}(t_1).
$

\item \emph{(Threshold optimality).} Define $t_n^\star \triangleq \sup\big\{\, t \,\big|\, \mathcal{X}_n^{\mathrm{(tr)}}(t)\neq\emptyset \,\big\}$.
Then, $\mathcal{X}_n^{\mathrm{(tr)}}(t)\neq\emptyset$ for all $t<t_n^\star$ and
$\mathcal{X}_n^{\mathrm{(tr)}}(t)=\emptyset$ for all $t>t_n^\star$.
Moreover, $t_n^\star$ equals the optimal objective value of \eqref{prob:P_MM_TR_n_epigraph_final}, and any
$\widetilde{x}_n^\star\in\mathcal{X}_n^{\mathrm{(tr)}}(t_n^\star)$ is an optimal solution.
\end{enumerate}
\end{prop}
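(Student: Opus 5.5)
The proof will go through the three properties in order. Everything rests on one scalar fact about $g_{n,u,v}$. Write $s=(x-x_u)^2\ge 0$, so that $r_{n,u,v}^2=s+C_{n,v}$ and
\[
g_{n,u,v}(x)=\rho\,\phi(s+C_{n,v}),\qquad \phi(q)\triangleq \frac{\eta e^{-\beta q}+\mu_n^2}{q}.
\]
For $q>0$, both $\eta e^{-\beta q}/q$ and $\mu_n^2/q$ are strictly decreasing, because $\beta\ge 0$, $\eta>0$ and $C_{n,v}\ge d_v^2>0$. Hence $\phi$ is strictly decreasing.

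It follows that $g_{n,u,v}(x)$ is a strictly decreasing, continuous function of $|x-x_u|$. It is maximized at $x=x_u$ and symmetric about $x_u$. This is exactly the monotonicity claim that Property 1 takes as its premise, and I would verify it explicitly rather than assume it.

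For Property 1 (interval structure), I would fix $t$ and study the superlevel set $\{x\in[0,D_x]: g_{n,u,v}(x)\ge t-A_{u,v}^{(-n)}\}$.

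1. If $t-A_{u,v}^{(-n)}>\rho\,\phi(C_{n,v})$, the set is empty.
2. Otherwise, by strict monotonicity and continuity in $d=|x-x_u|$, the set $\{d\ge 0:\rho\phi(d^2+C_{n,v})\ge t-A_{u,v}^{(-n)}\}$ is a closed interval $[0,d_{u,v}(t)]$, possibly all of $[0,\infty)$.
3. When it is bounded, $d_{u,v}(t)$ is the unique root of \eqref{eq:d_equation}, by the intermediate value theorem and strict monotonicity.
4. Intersecting with $[0,D_x]$ gives the clipped interval, or the whole of $[0,D_x]$ when the constraint never binds.

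A finite intersection of closed intervals is either empty or a closed interval, so $\mathcal{X}_n^{\mathrm{(tr)}}(t)$ is either empty or a closed interval.

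Property 2 (nestedness) is immediate. If $t_1\le t_2$ and $A_{u,v}^{(-n)}+g_{n,u,v}(x)\ge t_2$, then the same expression is $\ge t_1$. Intersections preserve set inclusion, which gives the claim for $\mathcal{X}_n^{\mathrm{(tr)}}$.

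For Property 3 (threshold optimality), I would first check that $t_n^\star$ is finite.

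- $\mathcal{X}_n^{\mathrm{(tr)}}(t)=[0,D_x]$ for $t\le\min_{u,v}A_{u,v}^{(-n)}$, since $g_{n,u,v}>0$. So $t_n^\star$ is bounded below.
- It is bounded above by $\max_{u,v}\big(A_{u,v}^{(-n)}+\rho\phi(C_{n,v})\big)$.

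Next I would show the set is nonempty below $t_n^\star$ and empty above it. Nonemptiness for every $t<t_n^\star$ follows from the definition of the supremum together with Property 2. Emptiness for $t>t_n^\star$ follows from the definition of $t_n^\star$.

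To connect $t_n^\star$ to the optimum, I would note that $\mathcal{X}_n^{\mathrm{(tr)}}(t)\neq\emptyset$ if and only if $\max_{x\in[0,D_x]}F_n^{\mathrm{(tr)}}(x)\ge t$. Here $F_n^{\mathrm{(tr)}}$ is a minimum of finitely many continuous functions on a compact interval, so it is continuous and attains its maximum $t^{\rm opt}$. Therefore $t_n^\star=t^{\rm opt}$. This is also the optimal value of the epigraph problem \eqref{prob:P_MM_TR_n_epigraph_final}, since that problem is equivalent to \eqref{prob:P_MM_TR_n_original}.

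Attainment also gives $\mathcal{X}_n^{\mathrm{(tr)}}(t_n^\star)\neq\emptyset$. Any $\widetilde x_n^\star$ in this set satisfies $F_n^{\mathrm{(tr)}}(\widetilde x_n^\star)\ge t_n^\star=t^{\rm opt}$, so it is optimal.

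The main obstacle is this last step, namely showing the supremum is attained rather than only approached. I would handle it with the compactness and continuity argument above. An alternative is to note that $\mathcal{X}_n^{\mathrm{(tr)}}(t)$ is a nested family of nonempty compact sets for $t<t_n^\star$, so their intersection is nonempty, and that intersection equals $\mathcal{X}_n^{\mathrm{(tr)}}(t_n^\star)$ by continuity of the constraints.
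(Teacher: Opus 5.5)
Your proof is correct and follows essentially the argument the paper intends. The paper omits the proof and defers to Proposition~1 of its cited companion work, but the statement itself rests on exactly the fact you verify: $g_{n,u,v}$ is strictly decreasing in $|x-x_u|$, by the same monotonicity of $\rho(\eta e^{-\beta s}+\mu_n^2)/s$ used in the proof of Lemma~\ref{lem:Fnell_unimodal}, and the standard interval-intersection, nestedness, and supremum reasoning follows from it; your compactness and continuity step showing that $t_n^\star$ is attained, so that $\mathcal{X}_n^{\mathrm{(tr)}}(t_n^\star)\neq\emptyset$, supplies a detail the statement leaves implicit.
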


The proof of Proposition \ref{prop:TRMM_nested_bisection} follows a similar argument to that of \cite[Proposition 1]{xu2025pinching-nlos} and is omitted here for brevity.
The properties established in Proposition \ref{prop:TRMM_nested_bisection} enable an efficient global solution method for \eqref{prob:P_MM_TR_n_epigraph_final}. In particular, the monotone nesting of $\mathcal{X}_n^{\mathrm{(tr)}}(t)$ allows $t_n^\star$ to be computed via a scalar bisection search over $t$. At each bisection iteration, we fix a candidate $t$, construct the
per-grid feasible intervals $\{\mathcal{X}_{u,v}(t)\}$, solving the
defining scalar equation for $d_{u,v}(t)$ when
$t>A_{u,v}^{(-n)}$, form their intersection
$\mathcal{X}_{n}^{(\mathrm{tr})}(t)$ according to~(49), and declare
$t$ feasible if $\mathcal{X}_{n}^{(\mathrm{tr})}(t)\neq\emptyset$. By iteratively shrinking the search range based on this feasibility test, we obtain $t_n^\star$ and a corresponding optimal $\widetilde{x}_n^\star$ with low complexity. This bisection-based solver is then used as the inner routine for each coordinate update in the overall BCD algorithm. We note that, although each one-dimensional coordinate subproblem is globally solved by the above bisection procedure, the overall BCD method does not in general guarantee global optimality for problem \eqref{prob:P_MM_TR}, and the
obtained solution may depend on the initialization and coordinate update
order.

To reduce the computational cost when $|\Omega_{\rm act}|$ is large, the feasibility test can be implemented with early termination. Specifically, the intervals ${\mathcal X_{u,v}(t)}$ are constructed and intersected sequentially, and the test terminates immediately once the running intersection becomes empty. In implementation, grids with larger SNR deficits under the current deployment can be examined first to facilitate early infeasibility detection. Moreover, since $g_{n,u,v}(x)$ is maximized at $x=x_u$, if
\begin{align}
A_{u,v}^{(-n)}+g_{n,u,v}(x_u)<t,
\end{align}
then $\mathcal X_{u,v}(t)=\emptyset$ and the tested $t$ can be declared infeasible without solving \eqref{eq:d_equation}. These operations only accelerate the feasibility check and do not change its result.

\begin{table}[!t]
\centering
\caption{Default simulation parameters.}
\renewcommand{\arraystretch}{1.25}
\begin{tabular}{p{4.8cm} p{3.2cm}}
\hline \hline
\textbf{Parameter} & \textbf{Value} \\
\hline
Carrier frequency ($f_c$) & $28$ GHz \\
Communication-region size ($D_x \times D_y$) & $60 \times 200$ m$^2$ \\
Waveguide height ($d_v$) & $10$ m \\
Transmit power ($P$) & $40$ dBm \\
Noise power ($\sigma^2$) & $-70$ dBm \\
NLoS power ($\mu^2$) &  $-60$ dBm \\
Number of PAs / waveguides ($N$) & $6$ \\
Number of grids ($N_h \times N_v$) & $400 \times 120$ \\
LoS blockage parameter ($\beta$) & $0.01$ \\
Algorithm tolerance ($\epsilon_t,\epsilon_d$) & $10^{-3}$ \\
Number of hotspots ($L$) & $3$ \\
Active-set threshold ($p_{\mathrm{th}}$) & $0.02 \times \max_{u,v} \{p_{u,v}\}$ \\
\hline \hline
\end{tabular}
\label{tab:sim-param}
\end{table}

Finally, we note that, for the traffic-restricted max--min design, the same two-state discrete architecture can be accommodated by restricting each coordinate update to the finite set of pre-installed pinching-antenna locations, such that each position update reduces to a direct comparison among the available candidates.

\section{Simulation Results}\label{sec:simulation} 
In this section, numerical simulations are presented to validate the performance of proposed algorithms for the traffic-aware pinching-antenna system designs. 
The main simulation parameters are listed in Table~\ref{tab:sim-param}. Unless otherwise stated, the default values in Table~\ref{tab:sim-param} are adopted. All curves are obtained by averaging over $100$ independent traffic realizations, each consisting of three randomly placed hotspots. The spatial traffic distribution is modeled as a three-component Gaussian mixture: for each realization, the hotspot centers are drawn uniformly from $[0,D_x]\times[-D_y/2,D_y/2]$, the mixture weights are independently generated and then normalized to sum to one, and all components share the same covariance matrix $\boldsymbol{\Sigma}=\mathrm{diag} \big((0.15D_x)^2,(0.2D_y)^2\big)$.

\begin{figure*}[!t]
	\begin{minipage}{0.32\linewidth}
		\centering
		\includegraphics[width=0.96\linewidth]{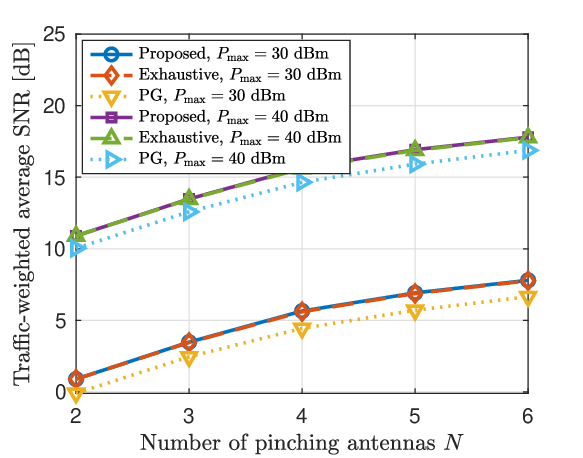}
		\captionsetup{justification=justified, singlelinecheck=false, font=small}	
		\caption{Traffic-weighted network average-SNR performance comparison among the proposed algorithm with two baselines.} \label{fig:tw_alg}  
	\end{minipage}~~
	\begin{minipage}{0.32\linewidth}
			\centering
				\includegraphics[width=0.96\linewidth]{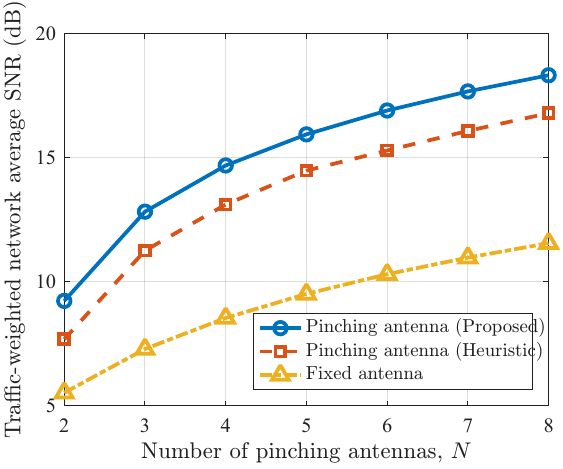}\\
        \captionsetup{justification=justified, singlelinecheck=false, font=small}	
        \caption{Traffic-aware network average SNR comparisons between different schemes versus the number of antennas $N$.} \label{fig:pg_snr_n_base}  
	\end{minipage}~~
    \begin{minipage}{0.32\linewidth}
		\centering
		\includegraphics[width=0.96\linewidth]{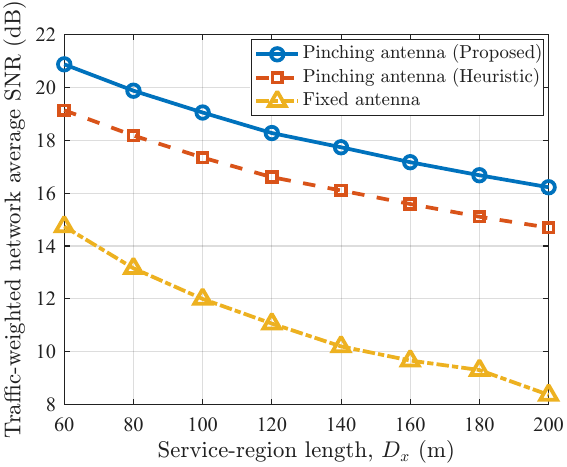}
		\captionsetup{justification=justified, singlelinecheck=false, font=small}	
		\caption{Traffic-aware network average-SNR comparisons between different schemes versus the communication-region length $D_x$.} \label{fig:pg_snr_dx_base}
	\end{minipage}~~
\end{figure*}

\subsection{Performance of the Traffic-Aware Network Average SNR Optimization}

\begin{table}[t] 
\centering
\caption{Execution time (in seconds) of the considered algorithms.}
\label{tab:time_exec}
\renewcommand\arraystretch{1.1}
\setlength{\tabcolsep}{0pt} 
\small

\newcolumntype{C}[1]{>{\centering\arraybackslash}p{#1}}

\begin{tabular}{C{16mm}|C{14mm}|C{14mm}|C{14mm}|C{14mm}|C{14mm}}
\hline
& \multicolumn{5}{c}{\cellcolor{blue!12} $\Delta x=\lambda/5,\; J=10, \; K_{\max} = 30$} \\
\cline{2-6}
& $N=2$ & $N=3$ & $N=4$ & $N=5$ & $N=6$ \\
\hline
Exhaustive & $7.83$ & $11.71$ & $15.6$ & $19.56$ & $23.43$ \\
Proposed   & $0.034$ &  $0.048$ & $0.059$ & $0.073$ &  $0.089$ \\
PG & $0.030$ & $0.044$ & $0.060$ & $0.070$ & $0.085$ \\
\hline
& \multicolumn{5}{c}{\cellcolor{blue!12} $\Delta x=\lambda/10,\; J=20,\; K_{\max} = 50$} \\
\cline{2-6}
& $N=2$ & $N=3$ & $N=4$ & $N=5$ & $N=6$ \\
\hline
Exhaustive & $15.61$ & $23.35$ & $30.88$ & $38.66$ & $46.61$ \\
Proposed   & $0.033$ & $0.049$ & $0.064$ & $0.077$ & $0.098$ \\
PG & $0.044$ & $0.072$ & $0.096$ & $0.130$ & $0.159$ \\
\hline
\end{tabular}
\end{table}

We first assess the performance of the proposed method by benchmarking it
against the exhaustive search baseline in Fig.~\ref{fig:tw_alg}, and also include
a projected gradient (PG) method as a representative local-optimization approach.
Exploiting the separable structure
$\bar{\Gamma}_{\mathrm{net}}=\sum_{n=1}^N f_n(\tilde x_n)$,
all three schemes reduce the design to $N$ independent one-dimensional maximization
problems.
For the exhaustive baseline, we conduct a dense linear search over $[0,D_x]$
with a uniform step size $\Delta_x=\lambda/5$ and choose the best grid point.
For the proposed method, we partition $[0,D_x]$ into $J=10$ coarse subintervals
to detect sign changes of $f_n'(x)$, refine each bracketed root via bisection,
and then evaluate $f_n(x)$ only at the resulting stationary-point candidates
(and endpoints when relevant). For PG, we iteratively update $\tilde x_n$ along
the gradient direction with step-size control and projection onto $[0,D_x]$ with maximum iterations of $K_{\max} = 30$.
 As seen in Fig.~\ref{fig:tw_alg}, the proposed method performs closely to the exhaustive
search for both $P=30$~dBm and $40$~dBm across all $N$, confirming that the
candidate-based search achieves a comparable objective value as exhaustive linear
search in this experiment while avoiding dense scanning. In contrast, PG yields
consistently lower values, since it is sensitive to initialization and may
converge to suboptimal stationary points in this nonconvex setting. Moreover,
the traffic-weighted average SNR increases with $N$ but exhibits diminishing
returns, and a larger $P$ consistently shifts the curves upward.

Table~\ref{tab:time_exec} further reports the average execution time of the three methods under two parameter settings: $\Delta x=\lambda/5,J=10,K_{\max}=30$ and $\Delta x=\lambda/10,J=20,K_{\max}=50$. In both settings, all methods exhibit an approximately increasing runtime with $N$, since they solve $N$ decoupled one-dimensional subproblems.
Several clear trends can be observed. First, the exhaustive linear search is the most time-consuming, because it must evaluate $f_n(x)$ over a dense grid on $[0,D_x]$ and the number of grid points scales inversely with $\Delta x$. Second, the proposed candidate-based method is consistently orders of magnitude faster than exhaustive search, since it avoids dense scanning and instead relies on coarse sign-change bracketing followed by only a small number of bisection refinements. Third, the PG method has comparable runtime to the proposed method in the first setting, but becomes noticeably slower in the second setting due to the larger iteration budget ($K_{\max}=50$). Overall, Table~\ref{tab:time_exec} confirms that the proposed approach offers a substantial computational advantage over exhaustive search, while remaining competitive with the PG method.

To quantify both the benefit of adopting pinching antennas and the extra gain from traffic-aware position optimization, Fig.~\ref{fig:pg_snr_n_base} compares the proposed design with two baselines. The first is a simple hotspot-center heuristic that places each pinching antenna at a hotspot's geometric center, representing coarse traffic-informed deployment. The second is a fixed-antenna benchmark that uses an $N$-element half-wavelength-spaced array centered at $x=D_x/2$ (with $\lambda/2$ spacing along $y$), representing a traffic-agnostic, non-reconfigurable design. As shown in Fig.~\ref{fig:pg_snr_n_base}, the proposed method achieves the highest network average SNR for all $N$, verifying the value of directly optimizing antenna locations under the traffic-weighted objective. The heuristic notably outperforms the fixed benchmark, but remains below the proposed design, while the fixed benchmark performs the worst due to its lack of adaptability to hotspot locations.



\begin{figure*}[!t]
    \begin{minipage}{0.32\linewidth}
		\centering
		\includegraphics[width=0.96\linewidth]{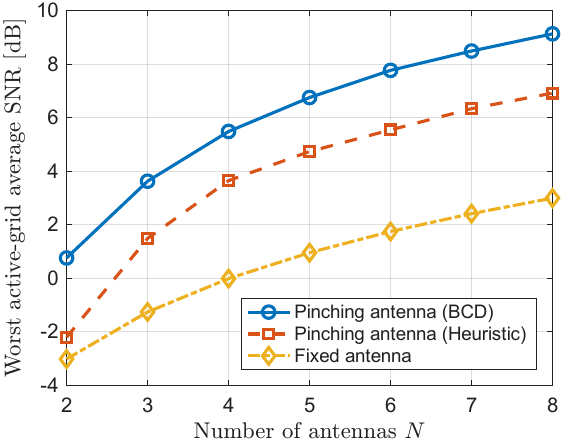}
		\captionsetup{justification=justified, singlelinecheck=false, font=small}	
		\caption{Worst active-grid average SNR comparisons with baselines versus the number of antennas $N$ for the traffic-restricted max-min design.} \label{fig:tr_snr_n} 
	\end{minipage}~~
	\begin{minipage}{0.32\linewidth} 
		\centering
		\includegraphics[width=0.96\linewidth]{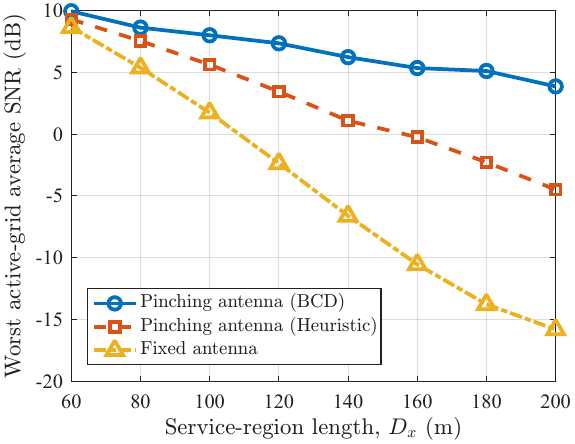}\\
		\captionsetup{justification=justified, singlelinecheck=false, font=small}	
        \caption{Worst active-grid average SNR comparisons with baselines versus the communication-region length $D_x$ for the traffic-restricted max-min design.}  \label{fig:tr_snr_dx} 
	\end{minipage}~~
	\begin{minipage}{0.32\linewidth}
			\centering
				\includegraphics[width=0.96\linewidth]{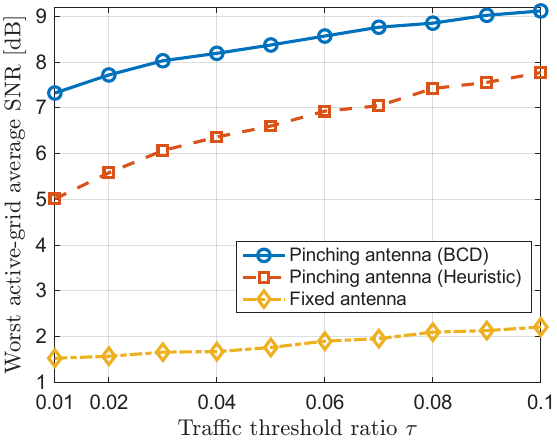}\\
        \captionsetup{justification=justified, singlelinecheck=false, font=small}	
        \caption{Worst active-grid average SNR comparisons with baselines versus the traffic threshold ratio $\tau$ for the traffic-restricted max-min design.} 
        \label{fig:tr_snr_tau} 
	\end{minipage}
\end{figure*}

To study how the gain of pinching antennas and the extra gain from traffic-aware position optimization scales with the service-region length, Fig.~\ref{fig:pg_snr_dx_base} compares the proposed design with the two baselines as $D_x$ varies. As $D_x$ increases, the network average SNR decreases for all schemes because the typical propagation distance grows, degrading link quality over the traffic field. Nevertheless, the proposed design consistently performs best across the entire $D_x$ range, demonstrating the robustness of traffic-aware optimization as the coverage area expands. The heuristic hotspot-center deployment remains competitive and clearly outperforms the fixed benchmark, confirming that even simple traffic-informed placement is beneficial, but it is still inferior to the proposed method. Moreover, the gap between the pinching-antenna designs and the fixed benchmark widens with $D_x$, indicating that placement adaptability becomes increasingly valuable in larger service regions.

\subsection{Performance of the Traffic-Restricted Max-Min Average SNR Optimization}

To illustrate the gain from pinching antennas and traffic-aware positioning, Fig.~\ref{fig:tr_snr_n} compares the proposed BCD-based design with two baselines (set following Fig.~\ref{fig:pg_snr_n_base}). The BCD solution consistently achieves the highest worst active-grid average SNR for all $N$, validating explicit optimization under the traffic-restricted max--min objective. The hotspot-center heuristic improves upon the fixed benchmark but remains inferior because it does not balance SNR across active grids. The fixed benchmark performs worst due to its traffic-agnostic, non-adaptive placement, which leaves a more severe bottleneck grid.
Fig.~\ref{fig:tr_snr_dx} studies how the pinching-antenna gain (and the extra gain from position optimization) changes with the service-region length $D_x$. The worst active-grid average SNR decreases with $D_x$ for all schemes due to larger typical propagation distances and a more challenging bottleneck in the active set. Still, the proposed BCD-based design consistently performs best over the entire $D_x$ range, showing its effectiveness in sustaining the worst-grid SNR as coverage expands. The hotspot-center heuristic degrades faster since it does not directly optimize the bottleneck, while the fixed benchmark suffers the largest loss, underscoring the poor robustness of traffic-agnostic, geometry-constrained deployments in large regions.

Fig.~\ref{fig:tr_snr_tau} evaluates how the traffic-restricted max-min design varies with the active-set threshold $p_{\mathrm{th}}=\tau\max_{u,v}p_{u,v}$, where $\tau$ controls how selectively traffic-dominant grids are included. As $\tau$ increases, the worst active-grid average SNR generally improves for all schemes because the active set becomes smaller and more concentrated on heavily loaded locations, easing the bottleneck. Throughout the entire $\tau$ range, the proposed BCD-based pinching-antenna design consistently achieves the highest worst active-grid SNR, showing that traffic-aware position optimization remains effective under different active-set definitions. The hotspot-center heuristic follows the same trend but stays noticeably below BCD since it does not explicitly balance the SNR across the active grids, while the fixed benchmark performs worst and is least robust to traffic-dependent selection.

To further evaluate the solution quality of the proposed BCD method, Table~\ref{tab:maxmin_global} compares it with a globally optimal branch-and-bound (BnB) benchmark in small-scale setups with $D_x=30$ m and $D_y=10$ m. To alleviate initialization sensitivity, multiple initial points are used for BCD, and the best converged solution is retained. The BnB solver is initialized with the BCD solution as a feasible incumbent, and its reported CPU time includes both the BCD initialization and the subsequent BnB search. As shown, BCD achieves an optimality gap of approximately $0.92$--$1.00$ dB for $N=3$--$5$. More importantly, while the BCD runtime remains around $0.11$ s, the runtime required to obtain the globally optimal solution increases rapidly with $N$, reaching $15.80$ s for $N=5$. These results demonstrate that the proposed BCD method provides a favorable tradeoff between solution quality and computational complexity.

\begin{table}[t]
\centering
\renewcommand\arraystretch{1.1}
\caption{Comparison with the globally optimal solution for the
traffic-restricted max--min average-SNR problem.}
\label{tab:maxmin_global}
\begin{tabular}{c|ccc|cc}
\hline
$N$
& \multicolumn{2}{c}{Worst active-grid SNR [dB]}
& Gap [dB]
& \multicolumn{2}{c}{CPU time [s]} \\
\cline{2-3}\cline{5-6}
& BCD & Global BnB & & BCD & Global BnB \\
\hline
3 & 20.792 & 21.710 & 0.918 & 0.111 & 0.169 \\
4 & 22.600 & 23.602 & 1.002 & 0.109 & 0.728 \\
5 & 23.821 & 24.817 & 0.996 & 0.114 & 15.801 \\
\hline
\end{tabular}
\end{table}

Fig.~\ref{fig:topview_obj_compare} illustrates how different traffic-aware objectives can yield different pinching-antenna placements under the same setting. We consider a single-waveguide scenario with $\widetilde y=0$ and one pinching antenna whose location $\widetilde x\in[0,D_x]$ is optimized. The background heatmap shows the normalized traffic weights $p_{u,v}$ from a four-hotspot Gaussian-mixture field. For this fixed traffic map, we compute (i) the network average-SNR maximization solution and (ii) the traffic-restricted max-min average-SNR solution, and mark both locations on the waveguide. As seen in Fig.~\ref{fig:topview_obj_compare}, the average-SNR design moves the antenna closer to the highest-traffic region to boost traffic-weighted efficiency, whereas the traffic-restricted max-min design shifts toward a more “balancing” position to improve the worst SNR within $\Omega_{\mathrm{act}}$. This contrast reflects the underlying philosophies: traffic-proportional averaging versus bottleneck-aware robustness within the active region.

\begin{figure}[!t]
	\centering
	\includegraphics[width=0.98\linewidth]{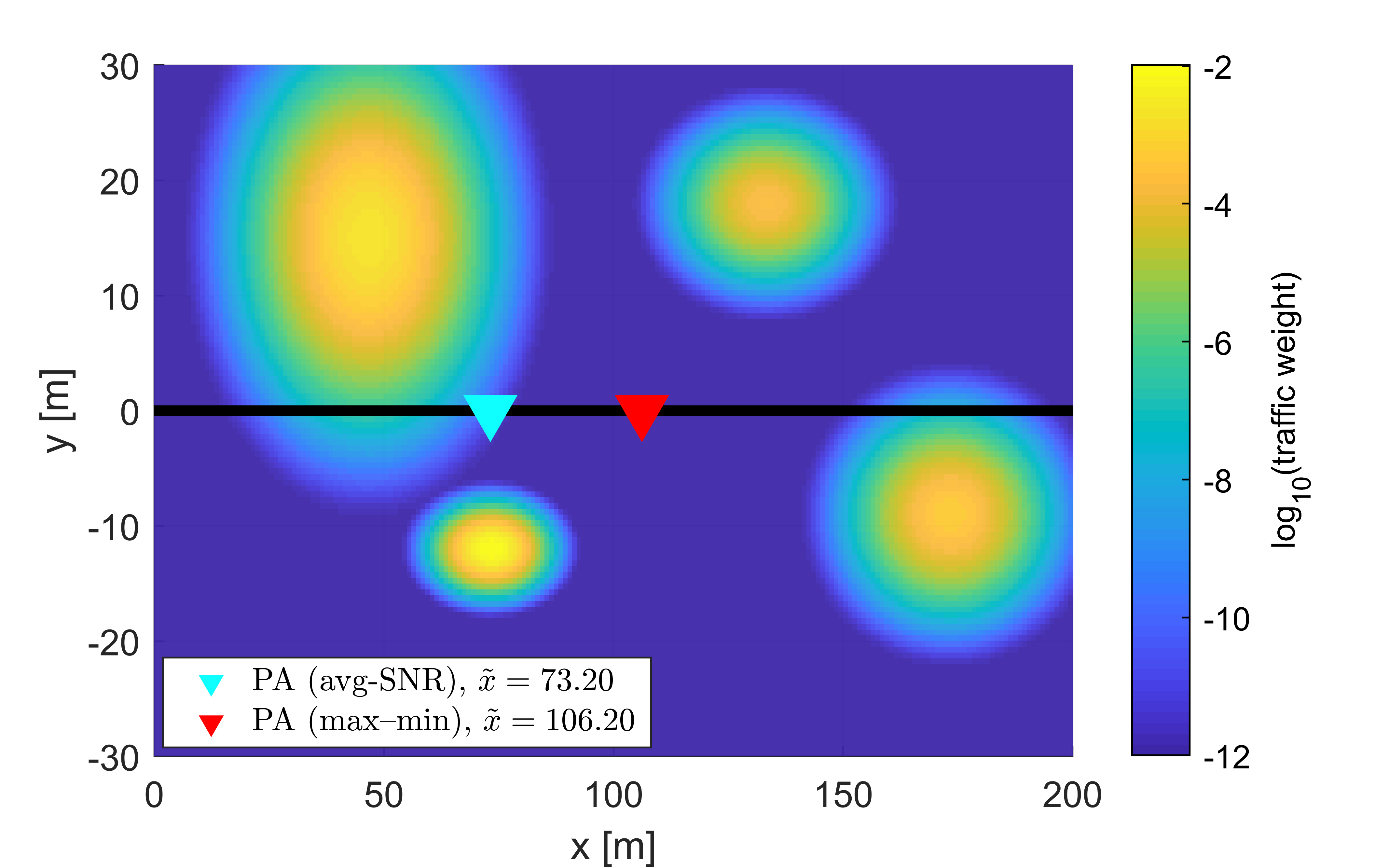}\\
        \captionsetup{justification=justified, singlelinecheck=false, font=small}	
        \caption{ Top-view illustration of the optimized pinching-antenna position under the network average-SNR maximization and the traffic-restricted max--min average-SNR maximization objectives for a fixed four-hotspot traffic field with a single waveguide at $\widetilde y=0$.} \label{fig:topview_obj_compare}  
\end{figure}

\subsection{Impact of In-Waveguide Attenuation}

We finally examine the impact of in-waveguide attenuation on the proposed
traffic-aware design. In practical pinching-antenna systems, the signal
experiences propagation loss while traveling along the waveguide before
being radiated, which may affect both the received signal strength and the
preferred pinching-antenna locations
\cite{tyrovolas2025performance}. Following the channel models in
\cite{xu2025pinching,xu2025losblockage}, the channel from the $n$th
waveguide can be extended as
\begin{align}
h_n^{\mathrm{att}}(\boldsymbol{\psi},\tilde{\mathbf{x}})
=
e^{-\alpha \tilde{x}_n}
\left(
\xi_n h_n^{\mathrm{LoS}}(\boldsymbol{\psi},\tilde{\mathbf{x}})
+
h_n^{\mathrm{NLoS}}(\boldsymbol{\psi},\tilde{\mathbf{x}})
\right),
\label{eq:att_total_channel}
\end{align}
where $\alpha$ denotes the in-waveguide amplitude attenuation coefficient
in $\mathrm{m}^{-1}$, and $h_n^{\mathrm{LoS}}$ and
$h_n^{\mathrm{NLoS}}$ follow the definitions in Sec.~II-B.
Accordingly, the average received SNR at grid $(u,v)$ becomes
\begin{align}
\bar{\Gamma}_{u,v}^{\mathrm{att}}(\tilde{\mathbf{x}})
=
\rho
\sum_{n=1}^{N}
e^{-2\alpha \tilde{x}_n}
\frac{
\eta \exp\!\left(-\beta r_{n,u,v}^2(\tilde{x}_n)\right)
+\mu_n^2
}{
r_{n,u,v}^2(\tilde{x}_n)
},
\label{eq:att_average_snr}
\end{align}
and the corresponding traffic-weighted network average SNR is
\begin{align}
\bar{\Gamma}_{\mathrm{net}}^{\mathrm{att}}(\tilde{\mathbf{x}})
=
\sum_{u=1}^{N_h}\sum_{v=1}^{N_v}
p_{u,v}
\bar{\Gamma}_{u,v}^{\mathrm{att}}(\tilde{\mathbf{x}}).
\label{eq:att_network_snr}
\end{align}

To quantify the impact of neglecting in-waveguide attenuation in the
position design, we compare two schemes. The \emph{attenuation-aware}
design optimizes the pinching-antenna positions according to
\eqref{eq:att_network_snr}, whereas the \emph{attenuation-ignorant}
design optimizes the positions using the original attenuation-free model
and subsequently evaluates the resulting deployment under the same
attenuated channel. Hence, the performance difference between the two
schemes directly characterizes the additional loss caused by neglecting
in-waveguide attenuation during position optimization. To isolate this
modeling effect, both designs are numerically optimized using the same
dense one-dimensional position search.

\begin{table}[t]
\centering 
\renewcommand{\arraystretch}{1.1}
\setlength{\tabcolsep}{7.5pt}
\caption{Impact of in-waveguide attenuation on the traffic-aware
pinching-antenna position design.}
\label{tab:attenuation}
\footnotesize
\begin{tabular}{c|cc|c|c}
\hline
$\alpha_{\mathrm{dB}}$
& \multicolumn{2}{c|}{Network average SNR [dB]}
& Loss [dB]
& Shift [m] \\
{}[dB/m] & Aware & Ignorant & & \\
\hline
0    & 20.952 & 20.952 & 0.000 & 0.000 \\
0.01 & 20.806 & 20.805 & 0.001 & 0.156 \\
0.02 & 20.661 & 20.659 & 0.003 & 0.312 \\
0.04 & 20.379 & 20.367 & 0.012 & 0.724 \\
0.06 & 20.105 & 20.077 & 0.028 & 1.056 \\
0.08 & 19.840 & 19.788 & 0.053 & 1.526 \\
0.10 & 19.586 & 19.500 & 0.086 & 1.891 \\
\hline
\end{tabular}
\end{table}

For numerical evaluation, we consider $D_x=100$ m, $D_y=30$ m, $N=6$, and
$N_h\times N_v=200\times60$, while the remaining parameters follow
Table \ref{tab:sim-param}. We vary the propagation loss as
$\alpha_{\mathrm{dB}}\in
\{0,0.01,0.02,0.04,0.06,0.08,0.1\}$ dB/m.
In particular, the range is consistent with the
experimentally reported propagation losses of dielectric
waveguides at millimeter- and submillimeter-wave frequencies
\cite{yeh2000communication}.
Table~\ref{tab:attenuation} summarizes the results, where ``Loss'' denotes
the traffic-aware network average-SNR loss of the attenuation-ignorant design relative
to the attenuation-aware design, and ``Shift'' denotes the average
displacement of the attenuation-aware pinching-antenna positions toward
the feed points relative to their attenuation-ignorant counterparts.
As expected, increasing the in-waveguide attenuation reduces the network
average SNR and gradually shifts the optimized radiation locations toward
the feed points, consistent with the impact of waveguide attenuation
reported in \cite{tyrovolas2025performance}. For example, at
$\alpha_{\mathrm{dB}}=0.08$ dB/m, the average feed-side displacement is
about $1.53$ m. Nevertheless, the additional SNR loss
caused by neglecting attenuation during position optimization remains no
more than $0.053$ dB. Even when $\alpha_{\mathrm{dB}}$ is increased to
$0.1$ dB/m, this loss is only $0.086$ dB. These results indicate that,
although in-waveguide attenuation reduces the absolute received SNR and
shifts the preferred pinching-antenna locations toward the feed points,
neglecting it in the position optimization introduces only a minor
additional performance loss for the considered setting. 
For longer waveguides or guiding media with higher
propagation loss, however, explicitly accounting for in-waveguide
attenuation may become more important.

\begin{figure}[!t]
	\centering
	\includegraphics[width=0.82\linewidth]{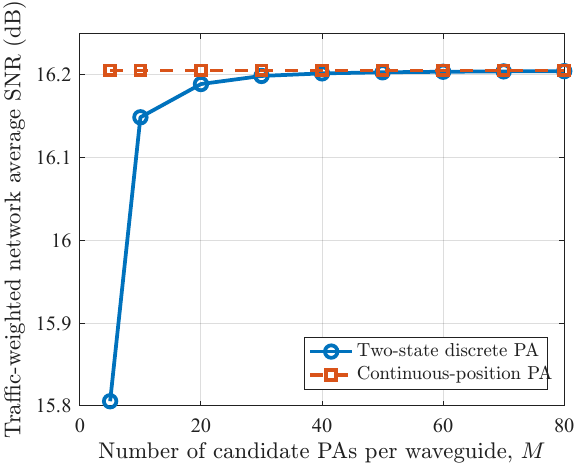}\\
        \captionsetup{justification=justified, singlelinecheck=false, font=small}	
        \caption{Traffic-weighted network average-SNR comparison between continuous-position and two-state discrete pinching-antenna implementations versus the number of candidate pinching antennas per waveguide.} \label{fig:two_state}  
\end{figure}

\subsection{Continuous Implementation Versus Discrete Implementation}
Fig.~\ref{fig:two_state} compares the continuously adjustable
pinching-antenna architecture with a two-state discrete implementation,
where $M$ pre-installed candidate pinching antennas are uniformly deployed
along each waveguide and one candidate is activated at a time. For the
two-state implementation, the optimal candidate on each waveguide is
independently selected according to the discrete counterpart of the
traffic-weighted network average-SNR objective. As shown, increasing $M$
rapidly reduces the performance gap relative to the continuous-position
design. When only a small number of candidate locations are available,
a noticeable discretization loss is observed. As $M$ increases, however,
the two-state design rapidly approaches the continuous-position benchmark
and becomes nearly indistinguishable from it with a moderate number of
candidate locations. These results demonstrate that the proposed
traffic-aware framework can be effectively implemented with practical
two-state pinching-antenna architectures without requiring continuously
adjustable radiation locations.

\section{Conclusion} \label{sec: conclusion}
In this paper (Part I), we investigated traffic-aware network-level design for pinching-antenna systems. 
Complementing existing link-level pinching-antenna optimization, we introduced a traffic-aware network-level framework based on long-term spatial traffic information.
A grid-based discretization was further developed to obtain numerically efficient discrete counterparts of the continuous-domain metrics. Building on these models, we formulated two traffic-aware deployment problems: maximizing the traffic-weighted network average SNR and a fairness-oriented traffic-restricted max–min average-SNR design over traffic-dominant grids. Efficient low-complexity algorithms were proposed, including a candidate-based low-complexity optimization algorithm for the traffic-weighted objective and a BCD method with bisection-based coordinate updates for the max–min objective. Numerical results demonstrated the effectiveness of the proposed algorithms and showed that the proposed traffic-aware designs can outperform representative fixed and heuristic baselines under the considered scenarios, while also revealing how the achievable gains and robustness–efficiency tradeoffs depend on the traffic heterogeneity and system parameters.

\begin{appendices}
    \section{Proof of Lemma \ref{lem:average_snr}} \label{appd: average snr} 
    
According to the random LoS/NLoS channel model, the channel coefficient of $n$-th pinching antenna is given by
\begin{align}
  h_n(\boldsymbol{\psi}_{u,v},\widetilde{\boldsymbol{x}})
  = \xi_n h^{\mathrm{LoS}}_n(\boldsymbol{\psi}_{u,v},\widetilde{\boldsymbol{x}})
    + h^{\mathrm{NLoS}}_n(\boldsymbol{\psi}_{u,v},\widetilde{\boldsymbol{x}}),
\end{align}
where $\xi_n \in \{0,1\}$ is the LoS indicator, and the LoS
probability is given by
\begin{align}
  \Pr\big[\xi_n = 1\big] = \exp\!\big(-\beta r_{n,u,v}^2(\widetilde{\boldsymbol{x}})\big).
\end{align}

From the LoS expression in Section~\ref{subsec:channel_model}, we have
\begin{align}
  \big|h^{\mathrm{LoS}}_n(\boldsymbol{\psi}_{u,v},\widetilde{\boldsymbol{x}})\big|^2
  = \frac{\eta}{r_{n,u,v}^2(\widetilde{\boldsymbol{x}})}.
\end{align}
For the NLoS component, we have
\begin{align}
  \mathbb{E}\Big[
    \big|h^{\mathrm{NLoS}}_n(\boldsymbol{\psi}_{u,v},\widetilde{\boldsymbol{x}})\big|^2
  \Big]
  = \frac{\mu_n^2}{r_{n,u,v}^2(\widetilde{\boldsymbol{x}})}.
\end{align}

Using the independence between $\xi_n$ and
$h^{\mathrm{NLoS}}_n$, and the zero mean of the NLoS term, the average
channel power contributed by the $n$-th pinching antenna to grid
$(u,v)$ gives

\begin{small}
\begin{align}
  \bar{G}_{n,u,v}(\widetilde{\boldsymbol{x}})
  &\triangleq
    \mathbb{E}\Big[
      \big|h_n(\boldsymbol{\psi}_{u,v},\widetilde{\boldsymbol{x}})\big|^2
    \Big] \notag\\
  &= \mathbb{E}[\xi_n]\,
     \big|h^{\mathrm{LoS}}_n(\boldsymbol{\psi}_{u,v},\widetilde{\boldsymbol{x}})\big|^2
     + \mathbb{E}\Big[
         \big|h^{\mathrm{NLoS}}_n(\boldsymbol{\psi}_{u,v},\widetilde{\boldsymbol{x}})\big|^2
       \Big] \notag\\
  &= \frac{
       \eta\, \exp\!\big(-\beta r_{n,u,v}^2(\widetilde{\boldsymbol{x}})\big)
       + \mu_n^2
     }{
       r_{n,u,v}^2(\widetilde{\boldsymbol{x}})
     }.
  \label{eq:avg_channel_power_grid_n}
\end{align}
\end{small}%
Therefore, the local average SNR in grid $(u,v)$ is
\begin{small}
\begin{align}
  \bar{\Gamma}_{u,v}(\widetilde{\boldsymbol{x}})
  &\triangleq
    \mathbb{E}\big[
      \Gamma_{u,v}(\widetilde{\boldsymbol{x}})
    \big] \notag\\
  &= \rho \sum_{n=1}^{N}
       \bar{G}_{n,u,v}(\widetilde{\boldsymbol{x}}) \notag\\
  &= \rho \sum_{n=1}^{N}
       \frac{ \eta\, \exp\!\big(-\beta r_{n,u,v}^2(\widetilde{\boldsymbol{x}})\big) + \mu_n^2
       }{ r_{n,u,v}^2(\widetilde{\boldsymbol{x}}) }.
\end{align}
\end{small}

Using \eqref{eq:net_avg_SNR_grid}, the traffic-aware
network average SNR can now be written as \eqref{eq:net_avg_SNR_grid_final}.
The proof is complete.  \hfill $\blacksquare$

\section{Proof of Lemma \ref{lem:Fnell_unimodal}} \label{appd:F_symmetry}

First, let $(X_\ell,Y_\ell)\sim\mathcal{N}(\boldsymbol{\mu}_\ell,\boldsymbol{\Sigma}_\ell)$.
Under Assumption \ref{ass:negligible_truncation} and by the definition of expectation under a PDF, \eqref{eq:Fnell_def_cont} can be written as
\begin{equation}
    F_{n,\ell}(\widetilde x_n)
    = \mathbb{E}_{X_\ell,Y_\ell}\!\left[\bar{\Gamma}_n(X_\ell,Y_\ell;\widetilde x_n)\right].
    \label{eq:Fnell_expectation}
\end{equation}
Then, by defining the scalar mapping $k_n(s)\triangleq \rho\,\frac{\eta e^{-\beta s}+\mu_n^2}{s},\, s>0$ and using \eqref{eq:gamma_n_continuous}, we can rewrite the per-waveguide average SNR as $\bar{\Gamma}_n(x,y;\widetilde x_n)=k_n\!\big(r_n^2(x,y;\widetilde x_n)\big)$.
Moreover, $k_n(s)$ is strictly decreasing on $s>0$ because both
$e^{-\beta s}/s$ and $1/s$ are strictly decreasing for $s>0$.
Since $\boldsymbol{\Sigma}_\ell$ is diagonal, $X_\ell$ and $Y_\ell$ are
independent. Fix $t\ge 0$ and define
\begin{equation}
\phi_{n,\ell}(t)
\triangleq
\mathbb{E}_{Y_\ell}\!\left[
k_n\!\big(t+(\widetilde y_n - Y_\ell)^2+d_v^2\big)
\right],
\label{eq:phi_nl_def}
\end{equation}
where the expectation is taken with respect to
$Y_\ell\sim\mathcal{N}(\mu_{\ell,y},\sigma_{\ell,y}^2)$.
Since $k_n(\cdot)$ is strictly decreasing, $\phi_{n,\ell}(t)$ is also strictly
decreasing on $t\ge 0$. Now, applying the law of iterated expectation to \eqref{eq:Fnell_expectation}
gives
\begin{align}
F_{n,\ell}(\widetilde x_n)
&=\mathbb{E}_{X_\ell}\!\Big[
\mathbb{E}_{Y_\ell}\!\big[
k_n\!\big((\widetilde x_n-X_\ell)^2+(\widetilde y_n - Y_\ell)^2+d_v^2\big)
\big]
\Big] \notag\\
&=\mathbb{E}_{X_\ell}\!\left[\phi_{n,\ell}\big((\widetilde x_n-X_\ell)^2\big)\right].
\label{eq:Fnell_iterated}
\end{align}
Define $g_{n,\ell}(u)\triangleq \phi_{n,\ell}(u^2),\, u\in\mathbb{R}$. 
Then $g_{n,\ell}(u)$ is even (i.e., $g_{n,\ell}(u)=g_{n,\ell}(-u)$) and strictly
decreasing in $|u|$ because $\phi_{n,\ell}(t)$ is strictly decreasing in $t$.
Let $f_{X_\ell}$ denote the PDF of
$X_\ell\sim\mathcal{N}(\mu_{\ell,x},\sigma_{\ell,x}^2)$. Writing the expectation
in \eqref{eq:Fnell_iterated} as an integral yields
\begin{align}
F_{n,\ell}(\widetilde x_n)
&=\int_{-\infty}^{\infty} \phi_{n,\ell}\big((\widetilde x_n-\xi)^2\big)\, f_{X_\ell}(\xi)\, d\xi \notag\\
&=\int_{-\infty}^{\infty} g_{n,\ell}(\widetilde x_n-\xi)\, f_{X_\ell}(\xi)\, d\xi \notag\\
&=(g_{n,\ell}*f_{X_\ell})(\widetilde x_n),
\label{eq:Fnell_convolution}
\end{align}
where $(g*f)(x)\triangleq\int g(x-\xi)f(\xi)\,d\xi$ denotes convolution.

Since $g_{n,\ell}$ is symmetric unimodal and $f_{X_\ell}$ is Gaussian (thus symmetric unimodal),
their convolution is unimodal (e.g., \cite[Theorem~2.1]{purkayastha1998simple}).
We next show it is symmetric about $\mu_{\ell,x}$.
Letting $\widetilde x_n=\mu_{\ell,x}+z$ and changing variables $\xi=\mu_{\ell,x}+t$, we have

\begin{small}
\begin{align}
F_{n,\ell}(\mu_{\ell,x}+z)
&=\int_{-\infty}^{\infty} g_{n,\ell}(z-t)\, f_0(t)\,dt \notag\\
&\overset{(a)}{=}\int_{-\infty}^{\infty} g_{n,\ell}\big(-(z-t)\big)\, f_0(t)\,dt \notag\\
&\overset{(b)}{=}\int_{-\infty}^{\infty} g_{n,\ell}(-z+t)\, f_0(t)\,dt \notag\\
&\overset{(c)}{=}\int_{-\infty}^{\infty} g_{n,\ell}(-z-t)\, f_0(-t)\,dt \notag\\
&\overset{(d)}{=}\int_{-\infty}^{\infty} g_{n,\ell}(-z-t)\, f_0(t)\,dt \notag\\
&=F_{n,\ell}(\mu_{\ell,x}-z),
\end{align}
\end{small}
where $f_0(t)\triangleq f_{X_\ell}(\mu_{\ell,x}+t)$ is even and $g_{n,\ell}$ is even,
(a) uses the evenness of $g_{n,\ell}(\cdot)$, i.e., $g_{n,\ell}(u)=g_{n,\ell}(-u)$,
(b) uses $-(z-t)=-z+t$, (c) applies the change of variable $t\mapsto -t$, and
(d) uses the evenness of $f_0(\cdot)$, i.e., $f_0(t)=f_0(-t)$.
Hence $F_{n,\ell}$ is symmetric about $\mu_{\ell,x}$ and, being unimodal,
achieves its maximum at $\widetilde x_n=\mu_{\ell,x}$. The proof is complete.
\hfill $\blacksquare$

\section{Proof of Proposition \ref{prop:two_hotspot_merge_split}} \label{appd:prop_two_hotspot}
The existence of the stationary point follows from \eqref{eq:deriv_at_mu1}--\eqref{eq:deriv_at_mu2} and the intermediate value theorem applied to the continuous function $\widetilde f_n'(x)$ on
$[\mu_{1,x},\mu_{2,x}]$.

For merge regime, if $x_b$ is the unique stationary point in $(\mu_{1,x},\mu_{2,x})$
and satisfies $\widetilde f_n''(x_b)<0$, then $x_b$ is a strict local maximizer. Moreover,
since $\widetilde f_n'(\mu_{1,x})>0$ and $\widetilde f_n'(\mu_{2,x})<0$, the derivative must change
sign from positive to negative across $(\mu_{1,x},\mu_{2,x})$. Uniqueness of the
stationary point forces this sign change to occur at $x_b$, implying that $\widetilde f_n$
is increasing on $[\mu_{1,x},x_b]$ and decreasing on $[x_b,\mu_{2,x}]$. Hence,
$x_b$ is the unique maximizer of $\widetilde f_n$ over $[\mu_{1,x},\mu_{2,x}]$.

For split regime, the second-order conditions imply that $x_L$ and $x_R$ are strict
local maximizers while $x_M$ is a strict local minimizer. Since
$x_L<x_M<x_R$, the two maximizers are distinct and are separated by a valley at
$x_M$, yielding a two-peak (split) structure on $(\mu_{1,x},\mu_{2,x})$.
\hfill $\blacksquare$

\end{appendices}



\end{document}